\title{Upper bounds for centerlines\footnote{This paper (without the appendix) has been published
in Journal of Computational Geometry 3:20--30, 2012.}}
\author{Boris Bukh\footnote{Centre for Mathematical Sciences, Cambridge CB3 0WB, England;
and Churchill College, Cambridge CB3 0DS, England.
\texttt{B.Bukh@dpmms.cam.ac.uk}.}
\and Gabriel Nivasch\footnote{Mathematics Department, EPFL, Lausanne, Switzerland. \texttt{gabriel.nivasch@epfl.ch}.
Part of this work was done when the author was at ETH Z\"urich, Z\"urich, Switzerland.}}
\date{}
\newtheorem{theorem}{Theorem}[section]
\newtheorem{lemma}[theorem]{Lemma}
\newtheorem{claim}[theorem]{Claim}
\newenvironment{definition}[1][Definition]
{\refstepcounter{theorem}\begin{trivlist}\item\textbf{#1
\thetheorem:\ }} {\end{trivlist}}
\newenvironment{remark}[1][Remark]
{\refstepcounter{theorem}\begin{trivlist}\item\textbf{#1
\thetheorem:\ }} {\end{trivlist}}
\newcommand{\R}{\mathbb{R}}  %reals
\newcommand{\Gs}{G_{\mathrm s}} %stretched grid
\newcommand{\I}{[0,1]} % the unit interval
\newcommand{\B}{\mathrm{BB}}    %the bounding box of the stretched grid
\newcommand{\Ha}{\mathcal H'_{\mathrm a}}
\newcommand{\Hb}{\mathcal H'_{\mathrm b}}
\newcommand{\Hc}{\mathcal H'_{\mathrm c}}
\newcommand{\Hup}{\mathcal H'_{\mathrm{up}}}
\newcommand{\Hdown}{\mathcal H'_{\mathrm{down}}}
\DeclareMathOperator{\vol}{\mathrm{vol}}   %volume
\begin{document}

\maketitle

\begin{abstract}
In 2008, Bukh, Matou\v{s}ek, and Nivasch conjectured that for
every $n$-point set $S$ in $\R^d$ and every $k$, $0 \le k \le
d-1$, there exists a $k$-flat $f$ in $\R^d$ (a ``centerflat")
that lies at ``depth" $(k+1) n / (k+d+1) - O(1)$ in $S$, in the
sense that every halfspace that contains $f$ contains at least
that many points of $S$. This claim is true and tight for $k=0$
(this is Rado's centerpoint theorem), as well as for $k = d-1$
(trivial). Bukh et al.~showed the existence of a $(d-2)$-flat
at depth $(d-1) n / (2d-1) - O(1)$ (the case $k = d-2$).

In this paper we concentrate on the case $k=1$ (the case of
``centerlines"), in which the conjectured value for the leading
constant is $2/(d+2)$. We prove that $2/(d+2)$ is an
\emph{upper bound} for the leading constant. Specifically, we
show that for every fixed $d$ and every $n$ there exists an
$n$-point set in $\R^d$ for which no line in $\R^d$ lies at
depth greater than $2n/(d+2) + o(n)$. This point set is the
``stretched grid"---a set which has been previously used by
Bukh et al.~for other related purposes.

Hence, in particular, the conjecture is now settled for $\R^3$.

Keywords: centerpoint, centerline, centerflat, stair-convexity,
stretched grid.\end{abstract}

\section{Introduction}

Given a finite set $S\subset \R^d$ and a point $x\in\R^d$,
define the \emph{depth} of $x$ in $S$ as the minimum of
$|\gamma \cap S|$ over all closed halfspaces $\gamma$ that
contain $x$. Rado's centerpoint theorem (1947,~\cite{rado})
states that for every $n$-point set $S\subset \R^d$ there
exists a point $x\in\R^d$ at depth at least $n/(d+1)$ in $S$.
Such a point $x$ is called a \emph{centerpoint}.

Centerpoints, besides being a basic notion in discrete
geometry, have also been studied in connection with statistical
data analysis: The centerpoint $x$ is a single point that
describes, in some sense, a given ``data set" $S$
\cite{chan_tukeydepth,dg_halfspacedepth,small_medians}.

The notion of \emph{depth} that we use in this paper is
sometimes called \emph{halfspace depth} or \emph{Tukey depth},
to distinguish it from other notions of depth (see, for
example, \cite{nabil}).

The constant $1/(d+1)$ in the centerpoint theorem is easily
shown to be tight: Take $d+1$ affinely independent points in
$\R^d$, and let $S$ be obtained by replacing each of these
points by a tiny ``cloud" of $n/(d+1)$ points. Then no point in
$\R^d$ lies at depth greater than $n/(d+1)$ in $S$.

In this paper we consider a generalization of the centerpoint
theorem in which the desired object is not a deep point, but
rather a deep $k$-flat for some given $0\le k<d$. Thus, let us
define the \emph{depth} of a $k$-flat $f\subset \R^d$ in $S$ as
the minimum of $|\gamma \cap S|$ over all closed halfspaces
$\gamma$ that contain $f$.

Bukh, Matou\v{s}ek, and Nivasch~\cite{BMN_stabbing} proved that
for every $n$-point set $S\subset\R^d$ there exists a
$(d-2)$-flat $f\subset \R^d$ at depth at least $(d-1)n/(2d-1) -
O(1)$.\footnote{They showed that there exist $2d-1$ hyperplanes
passing through a common $(d-2)$-flat that partition $S$ into
$4d-2$ parts, each of size at least $n/(4d-2) - O(1)$. This
$(d-2)$-flat is the desired $f$, since every halfspace that
contains it must completely contain at least $2d-2$ of the
parts.}

It is trivial that there always exists a $(d-1)$-flat at depth
at least $n/2$ in $S$. In~\cite{BMN_stabbing} it was
conjectured that, in general, for every $k$, $0\le k\le d-1$,
there exists a $k$-flat at depth at least $(k+1)n/(k+d+1) -
O(1)$ in $S$, and that the fraction $(k+1)/(k+d+1)$ is sharp.
Such a flat would be called a \emph{centerflat}; and we call
this conjecture the \emph{centerflat conjecture}.

The centerflat conjecture is closely related to the
\emph{center transversal theorem} of Dol'nikov~\cite{dolnikov}
and \v Zivaljevi\'c and Vre\'cica~\cite{zv}; it states that, if
$S_1, \ldots, S_{k+1} \subset \R^d$ are point sets of sizes
$n_1, \ldots, n_{k+1}$, respectively, then there exists a
$k$-flat $f\subset \R^d$ that simultaneously lies at depth at
least $n_i/(d-k+1)$ in each $S_i$.

As far as we know, however, the centerflat conjecture itself
has not been studied until very recently. Arocha et
al.~\cite{ABMR_transv} have obtained a lower bound of
$1/(d-k+1)$ for the leading constant in the cojecture (see
Corollary 3 there). The same constant can also be obtained
using the center transversal theorem, by setting all the sets
$S_i$ to $S$. Actually, one can obtain this constant much more
simply, by projecting $S$ into $\R^{d-k}$ and then applying the
centerpoint theorem. However, the conjectured constant of
$(k+1)/(k+d+1)$ is larger than $1/(d-k+1)$ for all $1\le k \le
d-2$.

In this paper we focus on the case $k=1$ of the centerflat
conjecture (the case of ``centerlines"). For this case the
conjecture predicts a value of $2/(d+2)$ for the leading
constant, and we show that this value cannot be improved.
Specifically:

\begin{theorem}\label{thm_2d2_tight}
Let $d\ge 2$ be fixed. Then, for every $n$ there exists an
$n$-point set $\Gs \subset \R^d$ such that for every line
$\ell\subset \R^d$ there exists a halfspace containing $\ell$
and containing at most $2n/(d+2) + o(n)$ points of $\Gs$.
\end{theorem}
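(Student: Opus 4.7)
The strategy is to exploit the \emph{stretched grid} $\Gs$ together with its associated \emph{stair-convexity} framework, introduced by Bukh, Matou\v{s}ek, and Nivasch in earlier work. The grid has coordinate scales $x_1 \ll x_2 \ll \cdots \ll x_d$ so disparate that, after a coordinate transformation $\pi$, $\Gs$ becomes asymptotically uniform on the unit cube $\I^d$, while Euclidean convex sets in $\R^d$ correspond, in the limit, to \emph{stair-convex} sets in $\I^d$. In particular, any closed halfspace $\gamma\subset\R^d$ pulls back to a stair-convex region inside $\I^d$, and its complementary open halfspace pulls back to another stair-convex region; so $|\gamma\cap\Gs|/n$ equals, up to $o(1)$, the Lebesgue volume of the pullback of $\gamma$.

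Given a line $\ell\subset\R^d$, I would first describe the limiting shape of its pulled-back image $\ell'\subset\I^d$. Because $\ell$ is affine and its coordinates live on wildly different scales, $\ell'$ degenerates in the limit to a piecewise axis-parallel monotone path whose combinatorial type is determined by the ordering and signs of the coordinate velocities of $\ell$. The problem then reduces to a purely volumetric statement in $\I^d$: for each such stretched line $\ell'$, produce a stair-convex region $C\subset\I^d$ that is disjoint from $\ell'$ and satisfies $\vol(C)\ge d/(d+2) - o(1)$. Such a $C$ is the pullback of the complement of the desired halfspace~$\gamma$, and thereby yields $|\gamma\cap\Gs|\le 2n/(d+2) + o(n)$.

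The heart of the proof is this volumetric lemma, which I would attack by case analysis on the combinatorial type of $\ell'$. For each type, one exhibits $C$ as a stair-convex ``wedge'' bordering one side of $\ell'$ and computes its volume directly. The extremal type, which forces the sharp constant $d/(d+2)$, should be the ``stair-diagonal'' line that successively traverses the full $x_d$-range, then the full $x_{d-1}$-range, and so on; a direct integration in that case yields exactly $d/(d+2)$, while every other type leaves strictly more room for~$C$.

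The main obstacle is controlling this case analysis uniformly. A line in $\R^d$ carries $2d-2$ affine degrees of freedom, which translate into many possible orderings and sign patterns of the coordinate velocities of $\ell$, and hence into many qualitatively distinct shapes for $\ell'$. Handling each type and verifying that none yields a smaller $C$ than the stair-diagonal configuration is the principal technical work. Lines nearly parallel to a coordinate subspace, whose stretched images degenerate to lower-dimensional objects, should need only separate but easy treatment, since such degeneracies can only decrease the attainable depth, not increase it.
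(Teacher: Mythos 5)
Your setup is the same as the paper's (the stretched grid, the logarithmic map $\pi$ to $[0,1]^d$, the passage to stair-convex limit shapes, and the reduction to a volumetric statement about $\pi(\ell\cap\B)$), but the key step that makes the theorem provable is missing from your proposal, and without it the plan does not close.

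You propose to produce a large stair-convex set $C$ disjoint from $\ell'$ directly, for each ``combinatorial type'' of the stretched line, and you correctly identify the obstacle: the number of types explodes with $d$, and showing that the stair-diagonal line is the true extremizer across all of them is a substantial uniform argument that you have not supplied. The paper sidesteps this entirely with a pigeonhole argument via a \emph{covering lemma}: for any two points $p,q$ (take them to be the images of the two boundary intersection points of $\ell$ with $\B$), one can build a family of $(d-1)(d+2)/2$ stair-halfspaces, each containing both $p$ and $q$ in its boundary, whose union covers $\R^d$ exactly $d-1$ times. Summing volumes over $[0,1]^d$ and dividing by the number of halfspaces gives an \emph{average} fractional volume of exactly $2/(d+2)$, so some member of the family has volume at most $2/(d+2)$. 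This instantly yields the sharp constant with no case analysis, no need to identify or verify an extremal configuration, and no need to prove that the stair-diagonal line is worst. The covering lemma itself is built by a clean induction on $d$ that requires no combinatorial enumeration of line types. So the genuine gap is the absence of this covering-and-averaging idea; replacing it with a direct extremal analysis over all combinatorial types of stretched lines is not a refinement of the same proof but a much harder, unexecuted alternative, and it is precisely the part you flagged as the ``principal technical work'' without resolving it.

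A secondary point: once you have a stair-halfspace $H$ of small volume containing $p$ and $q$, there is still nontrivial bookkeeping to convert it to a genuine Euclidean halfspace containing all of $\ell$ with the same point count up to $o(n)$ (push the vertex outward, invoke the approximation of a stair-halfspace by a Euclidean one for points $1$-far from the vertex, then separate $\ell$ from $\B\setminus H''$ by a hyperplane). Your proposal glosses over these steps, but they are routine once the volume bound is in hand; the covering lemma is the real missing idea.
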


Combining Theorem~\ref{thm_2d2_tight} with the above-mentioned
result in~\cite{BMN_stabbing}, we conclude that in $\R^3$ there
is always a line at depth $2n/5 - O(1)$, and that the fraction
$2/5$ is sharp.

The set $\Gs$ in the theorem is the ``stretched grid"---a point
set that was previously used by Bukh et al.~\cite{BMN_stair}
for obtaining lower bounds for weak $\epsilon$-nets, upper
bounds for the so-called \emph{first selection lemma}, and for
other related purposes (see also~\cite{GN_thesis}).
Unfortunately, we have been unable to find a simple,
``cloud"-based construction for proving
Theorem~\ref{thm_2d2_tight}, like the construction mentioned
above for centerpoints.

\section{The stretched grid and stair-convexity}

The \emph{stretched grid} is an axis-parallel grid of points
where, in each direction $i$, $2\le i \le d$, the spacing
between consecutive ``layers" increases rapidly, and
furthermore, the rate of increase for direction $i$ is much
larger than that for direction $i-1$. To simplify calculations,
we will also make the coordinates increase rapidly in the first
direction.\footnote{The most natural way to define the
stretched grid is using the notion of \emph{infinitesimals}
from nonstandard analysis. But we avoid doing so in order to
keep the exposition accessible.}

The formal definition is as follows: Given $n$, the desired
number of points, let $m = n^{1/d}$ be the side of the grid
(assume for simplicity that this quantity is an integer), and
let
\begin{equation}\label{eq_Gs}
\Gs = \bigl\{(K_1^{a_1}, K_2^{a_2}, \ldots, K_d^{a_d}) : a_i
\in \{0,\ldots, m-1\} \text{ for all $1\le i\le d$} \bigr\},
\end{equation}
for some appropriately chosen constants $1<K_1 \ll K_2 \ll K_3
\ll \cdots \ll K_d$. Each constant $K_i$ must be chosen
appropriately large in terms of $K_{i-1}$ and in terms of $m$.
We choose the constants as follows:
\begin{equation}\label{eq_Ki}
K_1 = 2d, \quad K_2 = K_1^m, \quad K_3 = K_2^m, \quad \ldots, \quad K_d = K_{d-1}^m.
\end{equation}

Throughout this paper we refer to the $d$-th coordinate as the
``height", so a hyperplane in $\R^d$ is \emph{horizontal} if
all its points have the same last coordinate; and a line in
$\R^d$ is \emph{vertical} if all its points share the first
$d-1$ coordinates. A \emph{vertical projection onto $\R^{d-1}$}
is obtained by removing the last coordinate. The \emph{$i$-th
horizontal layer of $\Gs$} is the subset of $\Gs$ obtained by
letting $a_d = i$ in (\ref{eq_Gs}).

The following lemma is not actually used in the paper, but it
provides the motivation for the stretched grid:

\begin{lemma}\label{lemma_property_Gs}
Let $a \in \Gs$ be a point at horizontal layer $0$, and let
$b\in \Gs$ be a point at horizontal layer $i$. Let $c$ be the
point of intersection between segment $ab$ and the horizontal
hyperplane containing layer $i-1$. Then $|c_i - a_i| \le 1$ for
every $1\le i\le d-1$.
\end{lemma}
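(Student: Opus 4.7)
My plan is to parametrize the segment $ab$ and exploit the fact that the vertical spacings $K_d^\ell$ grow so fast that $c$ must lie extremely close to $a$ along the segment. Let $\ell$ denote the layer index of $b$ (to avoid overloading the letter $i$, which the statement reuses for a coordinate index), so $a_d = 1$ and $b_d = K_d^\ell$. Writing $c = (1-t)a + tb$, the defining condition $c_d = K_d^{\ell-1}$ solves to $t = (K_d^{\ell-1} - 1)/(K_d^\ell - 1)$, and a crude estimate gives $t \le 2/K_d$ whenever $\ell \ge 1$ (trivially $t = 0$ when $\ell = 1$).

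For any horizontal coordinate $j$ with $1 \le j \le d-1$, we then have $|c_j - a_j| = t\,|b_j - a_j|$. Since both $a_j$ and $b_j$ belong to $\{K_j^0, \ldots, K_j^{m-1}\}$, $|b_j - a_j| \le K_j^{m-1}$, so $|c_j - a_j| \le (2/K_d)\, K_j^{m-1}$. It therefore suffices to verify $K_d \ge 2 K_j^{m-1}$. By (\ref{eq_Ki}), $K_d = K_{d-1}^m \ge K_j^m$ (using $K_j \le K_{d-1}$ for $j \le d-1$), and $K_j^m \ge 2 K_j^{m-1}$ reduces to $K_j \ge 2$, which holds because $K_j \ge K_1 = 2d \ge 2$.

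The entire argument is really a one-line estimate once one has the right picture: from $b$'s horizontal layer down to the next layer below already accounts for a fraction $1 - O(1/K_d)$ of the vertical drop from $b$ to $a$, so $c$ sits at parameter $O(1/K_d)$ of the way from $a$ toward $b$, and the horizontal drift is at most $O(K_j^{m-1}/K_d)$, which the choice (\ref{eq_Ki}) forces below $1$. There is no real obstacle; the only care needed is to keep the grid-coordinate exponents $a_i$ of (\ref{eq_Gs}), the layer index, and the coordinate index mutually distinct in the notation, which is why I would rename the layer of $b$ to $\ell$ throughout.
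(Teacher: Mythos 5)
Your proof is correct, and you have also (correctly) resolved a genuine notational collision in the paper's statement, where $i$ is overloaded as both the layer index of $b$ and the coordinate index in the conclusion. The paper does not actually give a proof of this lemma; it states only that the lemma "follows from a simple calculation" and is not used elsewhere, serving purely as motivation for the stretched grid. Your calculation is precisely the one the paper alludes to: parametrize the segment, bound the parameter $t$ at which it crosses the layer-$(\ell-1)$ hyperplane by $2/K_d$, bound the horizontal range by $K_j^{m-1}$, and check that the choice (\ref{eq_Ki}) makes $2K_j^{m-1}/K_d \le 1$. All the estimates (including $t \le 2/K_d$ for $\ell \ge 1$, and $K_d = K_{d-1}^m \ge K_j^m \ge 2K_j^{m-1}$) check out.
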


Lemma~\ref{lemma_property_Gs} follows from a simple calculation
(we chose the constants $K_i$ in (\ref{eq_Ki}) large enough to
make this and later calculations work out).

The grid $\Gs$ is hard to visualize, so we apply to it a
logarithmic mapping $\pi$ that converts $\Gs$ into the uniform
grid in the unit cube.

Formally, let $\B = [1,K_1^{m-1}] \times \cdots \times [1,
K_d^{m-1}]$ be the bounding box of the stretched grid, let
$[0,1]^d$ be the unit cube in $\R^d$, and define the mapping
$\pi \colon \B \to [0,1]^d$ by
\begin{equation*}
\pi(x) = \left( \frac{\log_{K_1} x_1}{m-1}, \ldots, \frac{\log_{K_d} x_d}
{m-1} \right).
\end{equation*}

Then, it is clear that $\pi(\Gs)$ is the uniform grid in
$[0,1]^d$.

We say that two points $a, b\in\B$ are \emph{$c$-close in
coordinate $i$} if the $i$-th coordinates of $\pi(a)$ and
$\pi(b)$ differ by at most $c/(m-1)$. Roughly speaking, this
means that $a$ and $b$ are at most $i$ layers apart in the
$i$-th direction. Otherwise, we say that $a$ and $b$ are
\emph{$c$-far in coordinate $i$}. Two points are
\emph{$c$-close} if they are $c$-close in every coordinate, and
they are \emph{$c$-far} if they are $c$-far in every
coordinate.

\begin{figure}
\centerline{\includegraphics{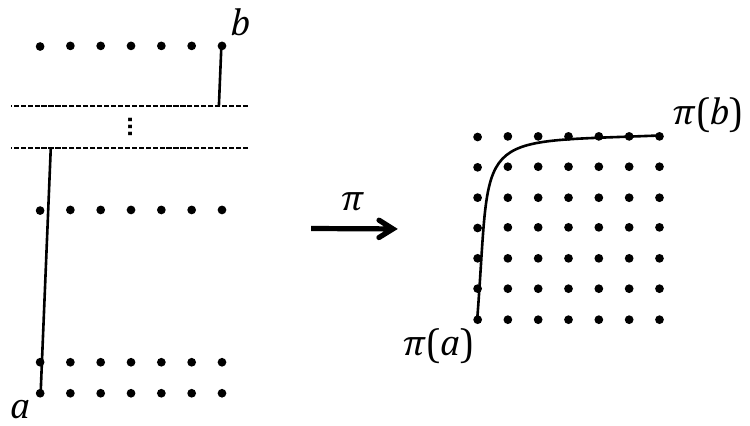}}
\caption{\label{fig_Gs_pi}The stretched grid and the mapping $\pi$ in the plane.
The stretched grid is too tall to be drawn entirely, so an intermediate portion of
it has been omitted. A line segment connecting two points is also shown, as well as
its image under $\pi$. (The first coordinate of the stretched grid does not increase geometrically
in this picture.)}
\end{figure}

Lemma~\ref{lemma_property_Gs} implies that the map $\pi$
transforms straight-line segments into curves composed of
almost-straight axis-parallel parts: Let $s$ be a straight-line
segment connecting two points of $\Gs$. Then $\pi(s)$ ascends
almost vertically from the lower endpoint, almost reaching the
height of the higher endpoint, before moving significantly in
any other direction; from there, it proceeds by induction. See
Figure~\ref{fig_Gs_pi}.

This observation motivates the notions of
\emph{stair-convexity}, which describe, in a sense, the limit
behavior of $\pi$ as $m\to\infty$.

\subsection{Stair-convexity}

We recall a few notions from \cite{BMN_stair}.

Given a pair of points $a, b\in \R^d$, the \emph{stair-path}
$\sigma(a,b)$ between them is a polygonal path connecting $a$
and $b$ and consisting of at most $d$ closed line segments,
each parallel to one of the coordinate axes. The definition
goes by induction on $d$; for $d=1$, $\sigma(a,b)$ is simply
the segment $ab$. For $d\ge 2$, after possibly interchanging
$a$ and $b$, let us assume $a_d\le b_d$. We set $a' = (a_1,
\ldots, a_{d-1}, b_d)$, and we let $\sigma(a,b)$ be the union
of the segment $aa'$ and the stair-path $\sigma(a', b)$; for
the latter we use the recursive definition, ignoring the common
last coordinate of $a'$ and $b$.

Note that, if $c$ and $d$ are points along $\sigma(a, b)$, then
$\sigma(c, d)$ coincides with the portion of $\sigma(a, b)$
that lies between $c$ and $d$.

A set $X\subseteq \R^d$ is said to be \emph{stair-convex} if
for every $a, b\in X$ we have $\sigma(a, b) \subseteq X$.

Given a set $X\subset \R^d$ and a real number $h$, let $X(h)$
(the \emph{horizontal slice at height $h$}) be the vertical
projection of $\{x \in X : x_d = h \}$ into $\R^{d-1}$.
In~\cite{BMN_stair} it was shown that a set $X\subset \R^d$ is
stair-convex if and only if the following two conditions hold:
(1) every horizontal slice $X(h)$ is stair-convex; (2) for
every $h_1 \le h_2 \le h_3$ such that $X(h_3) \neq\emptyset$ we
have $X(h_1) \subseteq X(h_2)$ (meaning, the horizontal slice
can only grow with increasing height, except that it can end by
disappearing abruptly).\footnote{This criterion was stated
slightly incorrectly in \cite{BMN_stair}; the formulation given
above is the correct one.} For convenience we call this
criterion \emph{monotonicity of slices}.

Let $a\in \R^d$ be a fixed point, and let $b\in \R^d$ be
another point. We say that $b$ has \emph{type $0$ with respect
to $a$} if $b_i\le a_i$ for every $1\le i\le d$. For $1\le j\le
d$ we say that $b$ has \emph{type $j$ with respect to $a$} if
$b_j\ge a_j$ but $b_i\le a_i$ for every $i$ satisfying $j+1\le
i\le d$. (It might happen that $b$ has more than one type with
respect to $a$, but only if some of the above inequalities are
equalities.)

Given a point $a\in\R^d$, let $C_i(a)$ (the \emph{$i$-th
component with respect to $a$}) be the set of all points in
$\R^d$ that have type $i$ with respect to $a$. Thus,
\begin{align*}
C_0(a) &= (-\infty, a_1] \times \cdots \times (-\infty, a_d];\\
C_i(a) &= (-\infty, \infty)^{i-1} \times [a_i, \infty) \times (-\infty, a_{i+1}] \times
\cdots \times (-\infty, a_d], \qquad \text{for $1\le i \le d$}.
\end{align*}
See Figure~\ref{fig_components}.

\begin{figure}
\centerline{\includegraphics{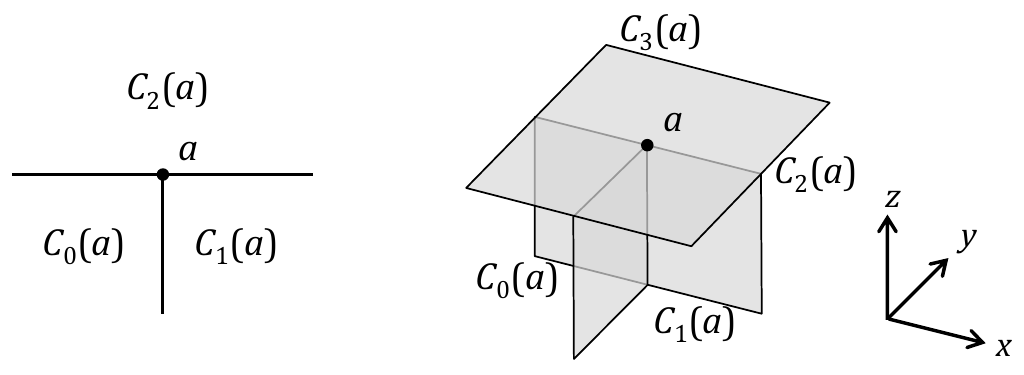}}
\caption{\label{fig_components}Components with respect to a point $a$ in the plane (left) and in $\R^3$ (right).}
\end{figure}

We now introduce a new notion, that of a
\emph{stair-halfspace}. Stair-halfspaces are, roughly speaking,
the stair-convex analogue of Euclidean halfspaces.

\begin{definition}\label{def_stair-halfspace}
Let $a\in\R^d$ be a point, and let $\emptyset \subsetneq I
\subsetneq \{0,\ldots,d\}$ be a set of indices. Then the set
$\bigcup_{i\in I} C_i(a)$ is called a \emph{stair-halfspace},
and $a$ is its \emph{vertex}.
\end{definition}

\begin{lemma}\label{lemma_stair-halfspace_stconv}
Let $H$ be a stair-halfspace. Then both $H$ and $\R\setminus H$
are stair-convex.
\end{lemma}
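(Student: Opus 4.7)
The proof proceeds by induction on the dimension $d$, applying the ``monotonicity of slices'' criterion for stair-convexity recalled above. The base case $d = 1$ is immediate: a stair-halfspace in $\R$ is either $(-\infty, a_1]$ or $[a_1, \infty)$, and both such sets and their complements (half-lines, possibly open) are convex, hence stair-convex.

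For the inductive step, fix $H = \bigcup_{i \in I} C_i(a)$, and set $a' = (a_1, \ldots, a_{d-1})$ and $I' = I \cap \{0, \ldots, d-1\}$. Direct inspection of the definition of $C_i(a)$ shows that the horizontal slice of $C_i(a)$ at height $h$ equals $C_i(a')$ for $0 \le i \le d-1$ when $h \le a_d$, equals $\R^{d-1}$ for $i = d$ when $h \ge a_d$, and is empty otherwise. Consequently
\[
H(h) \;=\; \bigcup_{i \in I'} C_i(a') \quad\text{for } h < a_d,
\qquad
H(h) \;=\; \begin{cases} \R^{d-1} & \text{if } d \in I, \\ \emptyset & \text{if } d \notin I, \end{cases} \quad\text{for } h > a_d,
\]
with an analogous formula at $h = a_d$. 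Thus every slice $H(h)$ is either $\emptyset$, all of $\R^{d-1}$, or a stair-halfspace in $\R^{d-1}$ with vertex $a'$ and index set $I'$, and so is stair-convex by the inductive hypothesis. The slice is constant on $\{h < a_d\}$ and on $\{h > a_d\}$; the single transition at $h = a_d$ either enlarges the slice to $\R^{d-1}$ (when $d \in I$) or collapses it to $\emptyset$ (when $d \notin I$), and both behaviours satisfy the monotonicity criterion (the latter via the ``abrupt disappearance'' clause).

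For the complement $\R^d \setminus H$, use $(\R^d \setminus H)(h) = \R^{d-1} \setminus H(h)$. Complementation swaps the roles of $d \in I$ and $d \notin I$, and sends a stair-halfspace in $\R^{d-1}$ with vertex $a'$ and index set $I'$ to a set which the inductive hypothesis guarantees is stair-convex; the monotonicity check is entirely symmetric to the one above. I expect the only real bookkeeping hazard to be the degenerate cases $I' = \emptyset$ and $I' = \{0, \ldots, d-1\}$, in which $\bigcup_{i \in I'} C_i(a')$ is not a stair-halfspace in the sense of Definition~\ref{def_stair-halfspace} but rather $\emptyset$ or $\R^{d-1}$; these must be handled by hand (trivially) so that the inductive hypothesis is applied only to genuine stair-halfspaces.
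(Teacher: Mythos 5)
Your proof is correct and follows essentially the same route as the paper's: induction on dimension via the monotonicity-of-slices criterion, observing that each horizontal slice of $H$ is either empty, all of $\R^{d-1}$, or a $(d-1)$-dimensional stair-halfspace with vertex $a'$ and index set $I'$, and that slices only grow with height (or disappear abruptly). Your version is merely more explicit about the slice computation, the complement, and the degenerate cases $I' \in \{\emptyset, \{0,\ldots,d-1\}\}$, which the paper leaves implicit.
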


\begin{proof}
Consider $H$. Every horizontal slice $H(h)$ of $H$ is either
empty, all of $\R^{d-1}$, or a $(d-1)$-dimensional
stair-halfspace. Thus, by induction, $H(h)$ is always
stair-convex. Furthermore, for every $h_1\le h_2 \le h_3$ such
that $H(h_3)\neq\emptyset$ we have $H(h_1) \subseteq H(h_2)$.
Thus, $H$ is stair-convex by monotonicity of slices. A similar
argument applies for $\R\setminus H$.
\end{proof}

(There are other sets in $\R^d$ that deserve to be called
\emph{stair-halfspaces}, that do not fit into the above
definition; for example, the set $\{(x,y) \in \R^2 : x\ge 0\}$.
But Definition~\ref{def_stair-halfspace} covers all the
stair-halfspaces that we will need in this paper.)

Two stair-halfspaces $\bigcup_{i\in I} C_i(a)$ and
$\bigcup_{i\in I} C_i(b)$ with the same index set $I$ are said
to be \emph{combinatorially equivalent}.

Note that the map $\pi$ preserves stair-convexity notions
(since it operates componentwise and is monotone in each
component). In particular, let $X\subseteq \B$; then: (1) $X$
is stair-convex if and only if $\pi(X)$ is stair-convex; (2)
$X$ is a stair-path if and only if $\pi(X)$ is a stair-path;
(3) there exists a stair-halfspace $H$ such that $X = H \cap
\B$ if and only if there exists a combinatorially equivalent
stair-halfspace $H'$ such that $\pi(X) = H' \cap [0,1]^d$.

The following lemma shows that every stair-halfspace is, in a
sense, the limit of the image under $\pi$ of a Euclidean
halfspace.

\begin{lemma}\label{lemma_stair-half_to_half}
Let $a\in BB$ be a point, and let $H = \bigcup_{i\in I} C_i(a)$
be a stair-halfspace with vertex $a$ and index set $\emptyset
\subsetneq I\subsetneq \{0,\ldots, d\}$. Then there exists a
Euclidean halfspace $H'$ with $a\in\partial H'$ such that, for
every point $x\in\B$ that is $1$-far from $a$, we have $x\in H$
if and only if $x\in H'$. (See
Figure~\ref{fig_approx_halfspace}.)
\end{lemma}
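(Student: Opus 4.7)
My plan is to exhibit $H'$ explicitly: set $\epsilon_j = +1$ if $j \in I$ and $\epsilon_j = -1$ otherwise for $j = 1, \dots, d$, pick positive weights $W_1, \dots, W_d$ (to be specified), take the normal $v = (v_1, \dots, v_d)$ with $v_j = \epsilon_j W_j / a_j$, and define $H' = \{x \in \R^d : v \cdot (x - a) \ge 0\}$, which contains $a$ on its boundary. Because being $1$-far in coordinate $j$ means $x_j / a_j \ge K_j$ or $x_j / a_j \le 1/K_j$, a quick calculation shows that with $\sigma_j = \mathrm{sign}(x_j - a_j) \in \{+,-\}$, the summand $v_j(x_j - a_j) = \epsilon_j W_j(x_j/a_j - 1)$ has magnitude $\ge W_j(K_j - 1)$ and sign $\epsilon_j$ when $\sigma_j = +$, and magnitude $\le W_j$ and sign $-\epsilon_j$ when $\sigma_j = -$.

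Let $j^*(x) = \max\{j : \sigma_j = +\}$, or $j^*(x) = 0$ if that set is empty; then $x$ has type $j^*(x)$ with respect to $a$, so $x \in H$ iff $j^*(x) \in I$. I plan to choose the weights so that: \emph{(i)} when $j^* \ge 1$, the $j^*$-th summand strictly dominates the rest of $v \cdot (x-a)$, making $\mathrm{sign}(v \cdot (x-a)) = \epsilon_{j^*}$; and \emph{(ii)} when $j^* = 0$, the sum $v \cdot (x-a) = -\sum_j \epsilon_j W_j(1 - x_j/a_j)$ — which differs from $-\sum_j \epsilon_j W_j$ by at most $\sum_j W_j/K_j$ — has the sign of $-\sum_j \epsilon_j W_j$, chosen to be positive iff $0 \in I$. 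Given (i) and (ii), we get $x \in H'$ iff $j^*(x) \in I$, matching $H$. Condition (i) reduces to the \emph{dominance inequalities} $W_j(K_j - 1) > 2\sum_{i<j} W_i K_i^{m-1}/a_i + 2\sum_{i>j} W_i$ for $j = 1, \dots, d$, which hold with enormous slack for any weights of comparable magnitude since $K_{j+1} = K_j^m$. Condition (ii) is a single sign condition on $\sum_j \epsilon_j W_j$: when all $\epsilon_j$'s agree in sign it is automatic (by the properness of $I$ this covers exactly $I = \{0\}$ and $I = \{1, \dots, d\}$), and otherwise both signs appear among the $\epsilon_j$ and one can force $\sum_j \epsilon_j W_j$ to have either sign by mildly rescaling the weights on one side — a perturbation too small to disturb the dominance inequalities.

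The main obstacle is the bookkeeping needed to verify dominance uniformly in the position of $a \in \B$: the worst case for dominance is when some $a_i$ is near the bottom of $\B$, making the ratio $K_i^{m-1}/a_i$ maximal. The task is to exhibit explicit weights $W_j$ in a compact range (say $[1, 2d]$) that simultaneously satisfy the $d$ dominance inequalities and prescribe the required sign of $\sum_j \epsilon_j W_j$, uniformly over all admissible pairs $(I, a)$. Once this is done, the case analysis above shows that the Euclidean halfspace $H'$ agrees with the stair-halfspace $H$ on every $1$-far point of $\B$.
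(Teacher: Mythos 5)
Your proposal takes essentially the same approach as the paper: both exhibit an explicit Euclidean halfspace whose defining linear form is a signed weighted combination of the ratios $x_j/a_j$ and verify that the highest-index coordinate on which $x$ exceeds $a$ (if any) contributes the dominant term, with a separate sign check when no such coordinate exists. The concrete weight choice you flag as the remaining obstacle is handled in the paper by taking small signed integers $s_0, s_1, \ldots, s_d$ with $\sum_i s_i = 0$, $1 \le |s_i| \le d$, and $s_i > 0$ iff $i \in I$ (so the constant term $s_0$ plays exactly the role of your $-\sum_j \epsilon_j W_j$), which satisfies all your dominance and sign requirements at once.
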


\begin{figure}
\centerline{\includegraphics{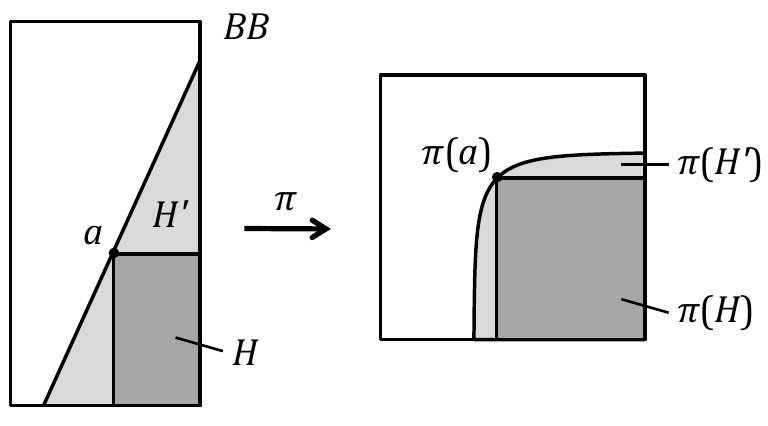}}
\caption{\label{fig_approx_halfspace}For every stair-halfspace $H$ there exists a Euclidean halfspace $H'$ that closely approximates $H$ within $\B$: The images of $H$ and $H'$ under $\pi$ almost coincide. (The figure is not to scale.)}
\end{figure}

\begin{proof}
The desired Euclidean halfspace is
\begin{equation*}
H' = \{ x \in \R^d : s_0 + s_1\frac{x_1}{a_1} + s_2\frac{x_2}{a_2} + \cdots +
s_d\frac{x_d}{a_d} \ge 0\},
\end{equation*}
where the $s_i$'s are small signed integers chosen to satisfy
the following conditions:
\begin{enumerate}
\item For every $0\le i\le d$, $s_i$ is positive if $i\in
    I$, and negative otherwise.

\item $\sum_i s_i = 0$.

\item We have $1\le |s_i|\le d$ for all $i$.
\end{enumerate}
Such a choice is always possible since $1\le|I|\le d$, so there
will be both positive and negative $s_i$'s.

The second condition above ensures that $a$ lies on the
boundary of $H'$.

Now consider a point $x\in\B$ that is $1$-far from $a$. Thus,
we have either $x_i \ge K_ia_i$ or $x_i\le a_i/K_i$ for every
coordinate $i$.

Let $i$ be the largest coordinate such that $x_i\ge K_ia_i$, if
it exists. Consider the sum
\begin{equation}\label{eq_sum_halfspace}
s_0 + s_1\frac{x_1}{a_1} + s_2\frac{x_2}{a_2} + \cdots + s_d\frac{x_d}{a_d}.
\end{equation}
We claim that the term $s_i\frac{x_i}{a_i}$ is larger in
absolute value than all the other terms in
(\ref{eq_sum_halfspace}) combined. Indeed, for $j<i$ we have
\begin{equation*}
|s_j| \frac{x_j}{a_j} \le |s_j|x_j \le |s_j| K^{m-1}_j = |s_j| \frac{K_{j+1}}{K_j} \le |s_j| \frac{K_i}{K_j} \le |s_j|\frac{x_i}{a_iK_j} \le d|s_i|\frac{x_i}{a_iK_j} \le 2^{-j} |s_i| \frac{x_i}{a_i}.
\end{equation*}
This is because the constants $K_i$ were chosen appropriately
large in (\ref{eq_Ki}). Similarly, for $j>i$ we have $|s_j|
\frac{x_j}{a_j} \le d/K_j \le 2^{-j}$.

Thus, the sign of (\ref{eq_sum_halfspace}) is the sign of
$s_i$, which implies that $x\in H'$ if and only if $i\in I$.

If, on the other hand, $x_i\le a_i/K_i$ for all $i$, then, by a
similar argument, the sign of (\ref{eq_sum_halfspace}) is the
sign of $s_0$, so $x\in H'$ if and only if $0\in I$.
\end{proof}

The following lemma formalizes what we mean by translating a
stair-halfspace ``outwards":

\begin{lemma}\label{lemma_translate_st_halfsp}
Let $H = \bigcup_{i\in I} C_i(a)$ be a stair-halfspace with
vertex $a\in \R^d$ and index set $\emptyset\subsetneq
I\subsetneq \{0,\ldots,d\}$. Let $b\in\R^d$ be another point
such that, for each $1\le i\le d$, we have $b_i<a_i$ if $i\in
I$, and $b_i>a_i$ otherwise.

Let $H'=\bigcup_{i\in I} C_i(b)$ be the stair-halfspace
combinatorially equivalent to $H$ with vertex $b$. Then $H
\subset H'$. (See Figure~\ref{fig_translate_st_halfsp}.)
\end{lemma}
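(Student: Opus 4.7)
The plan is to reformulate the set-theoretic inclusion $H \subseteq H'$ pointwise: for every $x \in \R^d$ that has some type $i \in I$ with respect to $a$, I will exhibit a $j \in I$ that is also a type of $x$ with respect to $b$. This reduces the proof to a coordinate-by-coordinate comparison and avoids any appeal to Lemma~\ref{lemma_stair-half_to_half}.

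The key observation is that the hypothesis on $b$ translates, in each coordinate $k \in \{1,\ldots,d\}$, into a one-sided inequality. If $k \in I$ then $b_k < a_k$, so $x_k \ge a_k$ forces $x_k > b_k$; if $k \notin I$ then $b_k > a_k$, so $x_k \le a_k$ forces $x_k < b_k$. Combined with the fact that a type of $x$ with respect to a vertex $v$ can be read off as the largest $k \ge 1$ with $x_k \ge v_k$ (and is $0$ if no such $k$ exists), this is enough to track how types can change when the vertex shifts from $a$ to $b$.

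A brief case analysis on $i$ then finishes the argument. If $i = 0$, then $x_k \le a_k$ for every $k$, so every coordinate $k \notin I$ already satisfies $x_k < b_k$; hence the type of $x$ with respect to $b$ is either $0$ or some index in $I$, and in either case lies in $I$ because $0 \in I$ by hypothesis. If $i \ge 1$, then $i \in I$ gives $x_i \ge a_i > b_i$, which forces the type $j$ of $x$ with respect to $b$ to satisfy $j \ge i$; the subcase $j = i$ is immediate, while if $j > i$ then $x_j \le a_j$ (since $i$ was the type with respect to $a$) together with $x_j \ge b_j$ would contradict $b_j > a_j$ in the case $j \notin I$, so $j \in I$.

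I expect the main obstacle to be purely notational: the index $0$ plays a slightly different role from the indices $1,\ldots,d$, and one has to keep straight the distinction between the combinatorial condition ``$k \in I$'' and the corresponding geometric condition on the relative positions of $a_k$ and $b_k$. Once that bookkeeping is in place, each case resolves in a line or two, and no further input beyond the definitions of type and of $C_i(\cdot)$ is needed.
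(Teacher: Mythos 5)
Your proposal is correct and takes essentially the same approach as the paper's proof: identify the type of $x$ with respect to $b$ as the largest index $k$ with $x_k \ge b_k$, show it is at least $i$, and use the sign conditions on $b_k - a_k$ to conclude that this index lies in $I$. The only cosmetic difference is that you split on $i=0$ versus $i\ge 1$ at the outset, whereas the paper folds the $i=0$ subcase into its main argument by observing that $k<i$ is impossible.
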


\begin{figure}
\centerline{\includegraphics{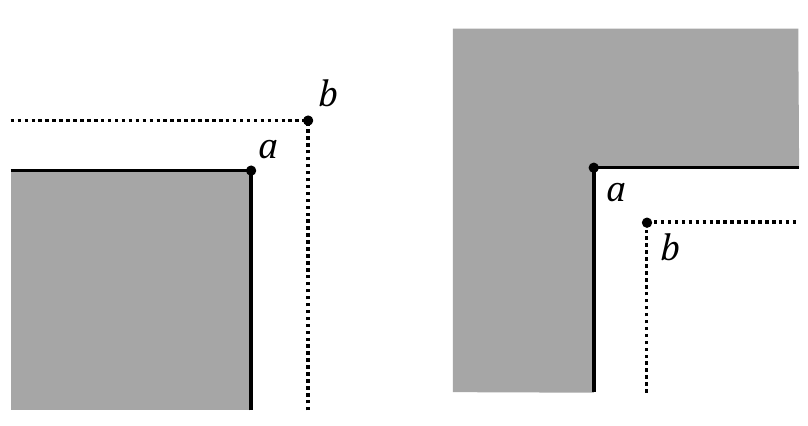}}
\caption{\label{fig_translate_st_halfsp}Translating a stair-halfspace outwards.}
\end{figure}

\begin{proof}
Let $p\in C_i(a)$ for some $i\in I$. We have $p_i \ge a_i >
b_i$ if $i\ge 1$, and $p_j \le a_j$ for each $i+1\le j\le d$.
We need to show that $p\in C_k(b)$ for some $k\in I$.

Let $k$ be the largest index such that $p\in C_k(b)$. Then
$p_k\ge b_k$ if $k\ge 1$, and $p_j<b_j$ for each $k+1\le j\le
d$.

If $k=i$ then $k\in I$ and we are done. Otherwise, we must have
$k>i$, or else we would have $p_i < b_i < a_i \le p_i$. But
$k>i$ implies that $b_k\le p_k\le a_k$. Since $b_k\neq a_k$, we
have $b_k<a_k$, which implies $k\in I$, as desired.
\end{proof}

\section{Proof of Theorem~\ref{thm_2d2_tight}}

In this section we prove that the stretched grid $\Gs$
satisfies Theorem~\ref{thm_2d2_tight}.

\begin{lemma}[Covering lemma]\label{lemma_exact_cover_lines}
Let $p, q$ be two points in $\R^d$, $d\ge 2$. Then there exists
a family $\mathcal H$ of $(d-1)(d+2)/2$ stair-halfspaces, each
one containing both $p$ and $q$, such that the stair-halfspaces
of $\mathcal H$ together cover $\R^d$ exactly $d-1$ times
(apart from the points on the boundary of the stair-halfspaces
of $\mathcal H$, which might be covered more times).
\end{lemma}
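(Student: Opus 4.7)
My plan is to construct the $(d-1)(d+2)/2 = \binom{d+1}{2} - 1$ stair-halfspaces explicitly, using the stair-path $\sigma(p, q)$ to position their vertices. The alternative expression $(d-1)(d+2)/2 = 2 + 3 + \cdots + d$ suggests building the family as a union of $d - 1$ subfamilies, where the $k$-th subfamily ($k = 2, 3, \ldots, d$) consists of $k$ stair-halfspaces whose index sets partition $\{0, 1, \ldots, d\}$; a subfamily with a common vertex tiles $\R^d$ exactly once (modulo boundary), so stacking $d - 1$ such subfamilies would yield the required multiplicity $d - 1$ automatically.

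For the base case $d = 2$, I verify directly: after swapping $p$ and $q$ so that $p_2 \le q_2$, take as common vertex the stair-corner $a = (p_1, q_2)$ of $\sigma(p, q)$, and use index sets $(\{1\}, \{0, 2\})$ when $p_1 \le q_1$ or $(\{0\}, \{1, 2\})$ when $p_1 > q_1$. Both halfspaces contain $p$ and $q$ (with $p$ and/or $q$ on boundaries), and they partition $\R^2$ modulo their common boundary, giving multiplicity $1 = d - 1$. For general $d$, I would proceed by induction, exploiting the slice description of stair-halfspaces---the slice of $\bigcup_{i \in I} C_i(a)$ at height $h \le a_d$ is the $(d-1)$-dimensional stair-halfspace with vertex $(a_1, \ldots, a_{d-1})$ and index set $I \setminus \{d\}$---to lift the inductive family to $\R^d$, and then add a new subfamily of $d$ halfspaces to bump the multiplicity from $d - 2$ to $d - 1$.

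The crucial technical point is that, for a stair-halfspace $\bigcup_{i \in I} C_i(a)$ to contain both $p$ and $q$, the type-sets $T_p(a) = \{i : p \in C_i(a)\}$ and $T_q(a)$ must each meet $I$. Generically these type-sets are singletons, leaving no room for a partition whose every part contains a type of $p$ and a type of $q$. The fix is to place vertices at stair-corners of $\sigma(p, q)$, where several coordinates of the vertex coincide with coordinates of $p$ or $q$; at such boundary vertices, $T_p(a)$ and $T_q(a)$ enlarge, allowing the required partitions to exist, and the lemma's ``apart from boundary'' clause absorbs the resulting boundary ambiguities. The main obstacle I anticipate is arranging these coincidences simultaneously for every subfamily of the cover---particularly the largest ($k=d$) subfamily, which demands many simultaneous boundary coincidences---and verifying that the interior multiplicities really do add up to exactly $d-1$; I expect this to require a careful case analysis based on the shape of $\sigma(p, q)$ and an explicit recipe for the vertex of each subfamily.
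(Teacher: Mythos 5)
Your base case $d=2$ is correct and coincides with the paper's, and you have rightly put your finger on the key constraint: for a stair-halfspace $\bigcup_{i\in I} C_i(a)$ to contain both $p$ and $q$, the index set $I$ must meet both $T_p(a)=\{i : p\in C_i(a)\}$ and $T_q(a)$. But this very constraint rules out your proposed decomposition for $d\ge 3$. If a subfamily has a common vertex $a$ and its $k$ index sets \emph{partition} $\{0,\ldots,d\}$, then since the sets are pairwise disjoint and each must meet $T_p(a)$ and $T_q(a)$, we need $\min(|T_p(a)|,|T_q(a)|)\ge k$. Now $|T_p(a)|=t$ forces at least $t-1$ coordinate coincidences $a_j=p_j$: if $i<j$ are both in $T_p(a)$ with $j\ge 1$, then $p\in C_i(a)$ gives $p_j\le a_j$ while $p\in C_j(a)$ gives $p_j\ge a_j$. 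Likewise $|T_q(a)|=t'$ forces $t'-1$ coincidences with $q$, and for generic $p\ne q$ (no shared coordinate values) these coincidence sets are disjoint subsets of $\{1,\ldots,d\}$, so $|T_p(a)|+|T_q(a)|\le d+2$ and thus $k\le\lfloor(d+2)/2\rfloor$. The size-$d$ subfamily your plan requires therefore cannot exist once $d\ge 3$, no matter where on $\sigma(p,q)$ you place its vertex.

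The paper escapes this by dropping the requirement that each subfamily tile $\R^d$ uniformly. After arranging $p_d\le q_d$, it splits $\R^d$ at height $q_d$: the extruded inductive family $\mathcal H^*=\{H\times(-\infty,q_d]:H\in\mathcal H'\}$ covers the lower part $\{x_d\le q_d\}$ exactly $d-2$ times and the upper part not at all, while $\mathcal H^{**}=\{C_i(a)\cup C_d(a): i\ne m\}\cup\{C_m(a)\}$ with $a=(p_1,\ldots,p_{d-1},q_d)$ covers the lower part once and the upper part $d-1$ times. Crucially, $\mathcal H^{**}$ has $d$ members but its index sets do \emph{not} partition $\{0,\ldots,d\}$: the component $C_d(a)$ is reused $d-1$ times. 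With this vertex one gets $T_p(a)=\{0,\ldots,d-1\}$ and $T_q(a)\supseteq\{m,d\}$, so every index set $\{i,d\}$ or $\{m\}$ meets both type-sets, and the small size of $T_q(a)$ is no longer an obstruction because $d$ is allowed to appear repeatedly. Only the union $\mathcal H^*\cup\mathcal H^{**}$ is uniform. If you want to repair your argument, you should replace the ``partition per subfamily'' invariant by tracking the coverage of the two halves $\{x_d\le q_d\}$ and $\{x_d\ge q_d\}$ separately and allowing index multisets, which is in effect what the paper does.
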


\begin{proof}
We proceed by induction on $d$. For the base case $d=2$, we
want to construct two stair-halfplanes containing $p$ and $q$
that cover $\R^2$ exactly once. Suppose without loss of
generality that $q_2 \ge p_2$. Let $a = (p_1, q_2)$. If $p_1
\le q_1$, then let $\mathcal H = \{ C_0(a) \cup C_2(a), C_1(a)
\}$; otherwise, let $\mathcal H = \{C_0(a), C_1(a) \cup C_2(a)
\}$ (see Figure~\ref{fig_covering_base}).

\begin{figure}
\centerline{\includegraphics{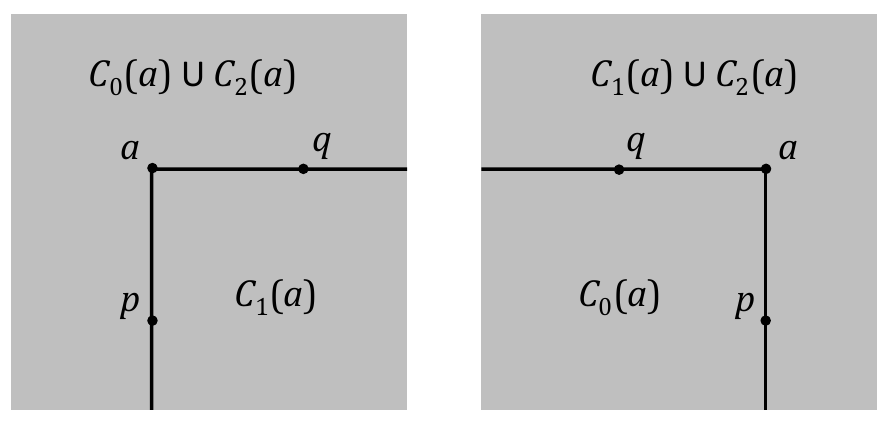}}
\caption{\label{fig_covering_base}The base case of the covering lemma.}
\end{figure}

Now assume $d\ge 3$. Let $\overline p$ and $\overline q$ denote
the vertical projection of $p$ and $q$ into $\R^{d-1}$,
respectively. By induction, let $\mathcal H'$ be a family of
$(d-2)(d+1)/2$ stair-halfspaces in $\R^{d-1}$ containing
$\overline p$ and $\overline q$ and covering $\R^{d-1}$ exactly
$d-2$ times.

Suppose without loss of generality that $p_d \le q_d$. As a
first step, construct the family of stair-halfspaces in $\R^d$
\begin{equation*}
\mathcal H^* = \{H \times (-\infty, q_d] : H\in \mathcal H'\}.
\end{equation*}
This corresponds to adding $q_d$ as the $d$-th coordinate to
the vertex of every stair-halfspace $H\in \mathcal H'$, and
then reinterpreting the components of $H$ as being
$d$-dimensional.

Note that $|\mathcal H^*| = (d-2)(d+1)/2$, that $p$ and $q$
belong to every stair-halfspace in $\mathcal H^*$, and that
$\mathcal H^*$ covers the \emph{lower part} of $\R^d$ (namely,
$\R^{d-1} \times (-\infty, q_d]$) exactly $d-2$ times, and it
does not cover the \emph{upper part} of $\R^d$ (namely,
$\R^{d-1} \times [q_d, \infty)$) at all.

Next, let $m$ be an index $0\le m\le d-1$ such that $\overline
q \in C_m(\overline p)$ (e.g., let $m\le d-1$ be the largest
index for which $p_m\le q_m$, or zero if no such index exists).
Let $a = (p_1, \ldots, p_{d-1}, q_d)$, and define the family of
stair-halfspaces
\begin{equation*}
\mathcal H^{**} = \{ C_i(a) \cup C_d(a) : 0\le i\le d-1, i\neq m \} \cup
\{C_m(a) \},
\end{equation*}
all having $a$ as vertex.

First, note that the stair-halfspaces of $\mathcal H^{**}$
cover the lower part of $\R^d$ exactly once (because each
component $C_0(a), \ldots, C_{d-1}(a)$ is present exactly
once), and they cover the upper part of $\R^d$ exactly $d-1$
times (because the component $C_d(a)$ is present $d-1$ times).

Furthermore, note that each stair-halfspace of $\mathcal
H^{**}$ contains both $p$ and $q$: The components $C_i(a)$,
$i\le d-1$ contain $p$; the component $C_d(a)$ contains $q$;
and the component $C_m(a)$ contains both $p$ and $q$, by the
choice of $m$.

Thus, the desired family of stair-halfspaces is $\mathcal H =
\mathcal H^* \cup \mathcal H^{**}$: It contains
\begin{equation*}
\frac{(d-2)(d+1)}{2} + d = \frac{(d-1)(d+2)}{2}
\end{equation*}
stair-halfspaces, and it covers $\R^d$ exactly $d-1$ times.
\end{proof}

\begin{remark}
The points $p$ and $q$ actually lie on the \emph{boundary} of
each stair-halfspace of $\mathcal H$. This can be seen by
recursively characterizing the boundary of a stair-halfspace (a
``stair-hyperplane"), and using induction.
\end{remark}

\begin{proof}[Proof of Theorem~\ref{thm_2d2_tight}]
Let $\Gs$ be the $n$-point stretched grid in $\R^d$, let $\B$
be its bounding box, and let $\ell$ be a line in $\R^d$. We
want to construct a Euclidean halfspace that contains $\ell$
and contains at most $2n/(d+2) + o(n)$ points of $\Gs$.

If $\ell$ does not intersect the interior of $\B$ then there is
nothing to do. Otherwise, let $p'$ and $q'$ be the intersection
points of $\ell$ with the boundary of $\B$, and let $p =
\pi(p')$, $q = \pi(q')$ be the corresponding points in the
boundary of $[0,1]^d$.

Let $\mathcal H$ be the family of stair-halfspaces guaranteed
by Lemma~\ref{lemma_exact_cover_lines} for the points $p$ and
$q$. By the pigeonhole principle, there must exist a
stair-halfspace $H\in\mathcal H$ such that $\vol(H\cap[0,1]^d)
\le 2/(d+2)$. Move the vertex $a$ of $H$ ``outwards" by
distance $1/(m-1)$ in each direction, so that $p$ and $q$ are
still contained in $H$ and are far enough from its boundary
(recall Lemma~\ref{lemma_translate_st_halfsp}). This increases
the volume of $H\cap[0,1]^d$ by only $o(1)$.

Now, the volume of $H\cap[0,1]^d$ closely approximates the
fraction of points of $\pi(\Gs)$ contained in $H$;
specifically, $\vol(H\cap[0,1]^d) = |\pi(\Gs) \cap H| / n \pm
O(n^{(d-1)/d})$. This is because $H\cap[0,1]^d$ is the union of
a constant number of axis-parallel boxes, and the claim is
clearly true for axis-parallel boxes.

Let $a' = \pi^{-1}(a)$, and let $H'$ be the stair-halfspace
combinatorially equivalent to $H$ having $a'$ as its vertex.
Then, $|\Gs \cap H'| = |\pi(\Gs) \cap H| \le 2n/(d+2) + o(n)$.
Furthermore, we have $p', q'\in H'$, and in fact, $p'$ and $q'$
are $1$-far from $a'$. Therefore, the Euclidean halfspace $H''$
promised by Lemma~\ref{lemma_stair-half_to_half} contains both
$p$ and $q$, and, like $H'$, it contains at most $2n/(d+2) +
o(n)$ points of $\Gs$.

It might still be possible that $H''$ does not contain all of
$\ell$, but this is easy to fix: The sets $\ell$ and $\B
\setminus H''$ are disjoint, and they are both convex.
Therefore, there exists a hyperplane $h$ that separates them.
Let $H'''$ be the halfspace bounded by $h$ that contains
$\ell$. Then $H''' \cap \B \subseteq H'' \cap \B$, so $H'''$
can only contain \emph{fewer} points of $\Gs$ than $H''$.
\end{proof}

\section{Generalization to $k$-flats}

We conjecture that the stretched grid $\Gs$ in fact gives a
tight upper bound of $(k+1)/(k+d+1)$ \emph{for all $k$} for the
leading constant in the centerflat conjecture.

We have a sketch of a proof. Its main ingredients are: (1) an
appropriate definition of \emph{stair-$k$-flats}, the
stair-convex equivalent of Euclidean $k$-flats; and (2) a
generalization of Lemma~\ref{lemma_exact_cover_lines} to the
effect that, for every stair-$k$-flat $f\subset \R^d$, there
exists a family $\mathcal H$ of
$\binom{d-1}{k}\frac{d+k+1}{k+1}$ stair-halfspaces, each one
containing $f$ in its boundary, and together covering $\R^d$
exactly $\binom{d-1}{k}$ times.

However, we have some problems formalizing the argument: We
have been unable to rigorously prove that our stair-flats are
indeed the ``limit case under $\pi$" of Euclidean $k$-flats,
and we have also been unable to deal with some degenerate
stair-flats.

For the interested reader, In Appendix~\ref{app_kflats} we
spell out the argument, pointing out the ``holes" that we still
have.

\paragraph{Acknowledgments.} Thanks to Ji\v r\'i Matou\v sek for some
insightful conversations on this and related topics, to Roman
Karasev for some useful correspondence, and to the anonymous
referee for giving a careful review and providing extensive
comments.

\bibliographystyle{plain}
\bibliography{upper_bds_centerlines_arxiv}

\appendix

\section{Generalization to $k$-flats: an incomplete argument}\label{app_kflats}

We define \emph{stair-$k$-flats}, which are the stair-convex analogues of
$k$-flats. Stair-$k$-flats in $\R^d$ are defined inductively on $k$ and $d$.
A stair-$k$-flat is always topologically equivalent to a regular $k$-flat;
this fact follows by induction, and it is inductively necessary for the
definition itself.

\begin{definition}
A \emph{stair-$0$-flat} is a point, and a \emph{stair-$d$-flat} in $\R^d$ is $\R^d$.
For $1\le k\le d-1$, a \emph{stair-$k$-flat}
$f$ in $\R^d$ has one of these three forms:
\begin{itemize}
\item (``Horizontal") $f = f' \times \{z\}$ for some stair-$k$-flat $f'$ in
$\R^{d-1}$ and some $z\in\R$.

\item (``Vertical") $f = f' \times (-\infty, \infty)$ for some stair-$(k-1)$-flat
$f'$ in $\R^{d-1}$.

\item (``Diagonal") Let $f'$ be a stair-$(k-1)$-flat in $R^{d-1}$, and let
$f''$ be a stair-$k$-flat in $R^{d-1}$ that contains $f'$. It follows by induction
that $f'$ is topologically equivalent to $\R^{k-1}$ and $f''$ is topologically
equivalent to $\R^k$; thus, $f'$ partitions $f''$ into two relatively closed
\emph{half-stair-$k$-flats} whose intersection equals $f'$. Let $h$ be one of
these halves. Then,
\begin{equation*}
f = \bigl( f'\times (-\infty, z] \bigr) \cup \bigl( h \times \{z\} \bigr),
\end{equation*}
for some $z\in\R$.
\end{itemize}
\end{definition}

\begin{figure}
\centerline{\includegraphics{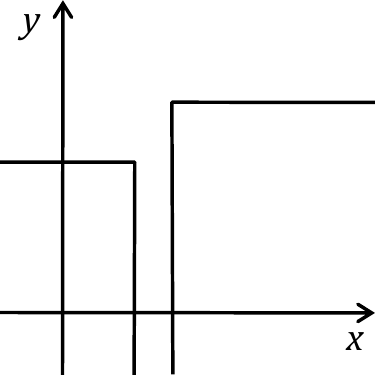}}
\caption{\label{fig_lines_R2}Stair-lines in the plane.}
\end{figure}

\begin{figure}
\centerline{\includegraphics{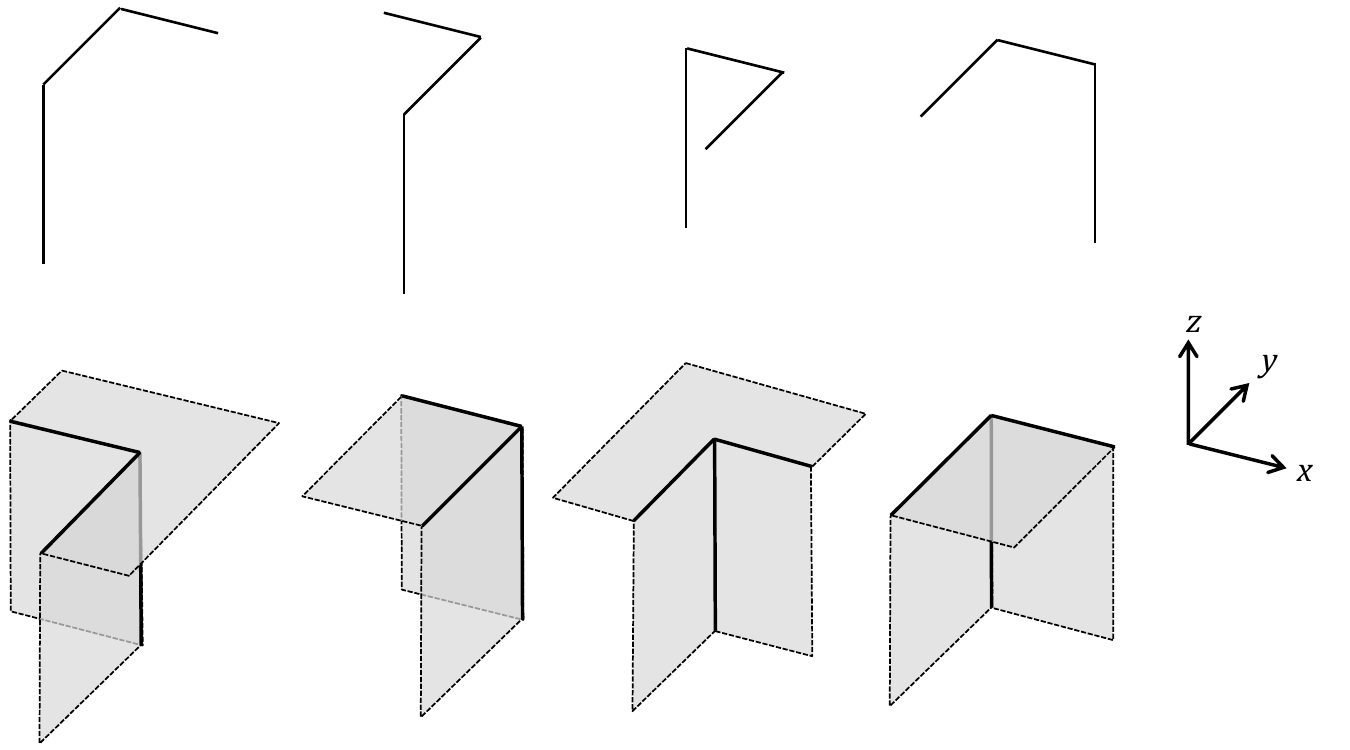}}
\caption{\label{fig_lines_planes_R3}Stair-lines and stair-planes in $\R^3$.}
\end{figure}

\begin{figure}
\centerline{\includegraphics{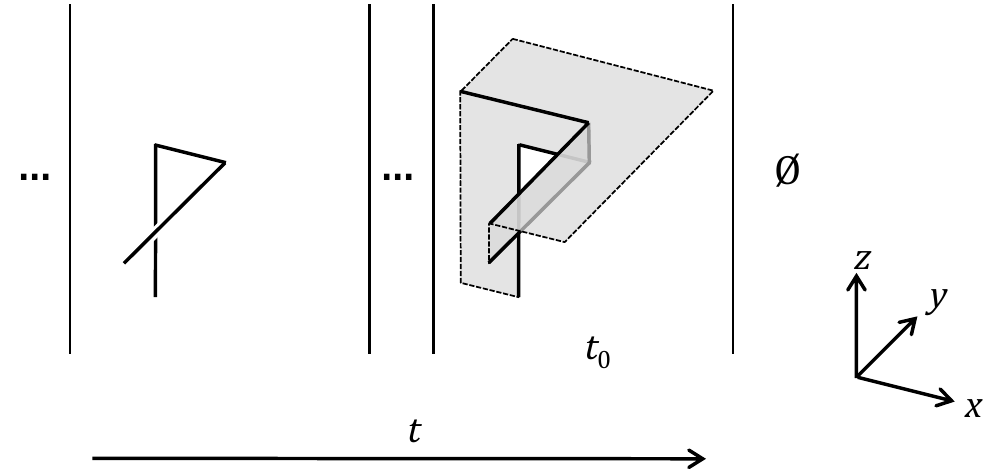}}
\caption{\label{fig_plane_R4}Example of a stair-plane in $\R^4$. The fourth coordinate
is $t$ (``time"). From $t=-\infty$ up to a certain time $t_0$ there exists a static
stair-line. At time $t_0$ a stair-halfplane bounded by this stair-line suddenly appears
for an instant, and then everything disappears.}
\end{figure}

See Figures~\ref{fig_lines_R2}, \ref{fig_lines_planes_R3}, and \ref{fig_plane_R4}
for some examples of stair-lines and stair-planes.

In a diagonal stair-flat $f$, the part $f' \times (-\infty, z]$ is
called the \emph{vertical part} of $f$, and the part $h\times \{z\}$ is called
the \emph{horizontal part} of $f$.

Diagonal stair-flats are the most general ones; the other ones can be considered
diagonal stair-flats for which either its horizontal or its vertical part has been moved
to infinity in some direction.

\begin{lemma}\label{lemma_half_convex}
\emph{(1)} Stair-flats are stair-convex. \emph{(2)} Closed half-stair-flats are stair-convex.
\end{lemma}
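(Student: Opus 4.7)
My plan is to prove (1) and (2) simultaneously, by induction on $d$, using the \emph{monotonicity of slices} characterization of stair-convexity recalled in Section~2: a set $X\subseteq\R^d$ is stair-convex iff every horizontal slice $X(y)$ is stair-convex in $\R^{d-1}$, and $X(y_1)\subseteq X(y_2)$ whenever $y_1\le y_2\le y_3$ and $X(y_3)\ne\emptyset$. The base cases are immediate: a stair-$0$-flat is a point, a stair-$d$-flat in $\R^d$ is $\R^d$, and in $\R^1$ the only half-stair-flats are closed rays.

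For the inductive step of (1), I case-split on the three forms of a stair-$k$-flat $f\subset\R^d$. If $f=f'\times\{z\}$ is horizontal, the only nonempty slice is $f'$, stair-convex by the inductive hypothesis on (1), and monotonicity is vacuous. If $f=f'\times(-\infty,\infty)$ is vertical, every slice equals the stair-$(k-1)$-flat $f'$, and monotonicity is trivial. For diagonal $f=(f'\times(-\infty,z])\cup(h\times\{z\})$, the slice at height $y<z$ is $f'$ (stair-convex by IH on (1)), the slice at $y=z$ is the half-stair-$k$-flat $h$ (stair-convex by IH on (2)), and slices above $z$ are empty; monotonicity follows from the containment $f'\subseteq h$, which holds by definition of $h$ as a closed half of $f''$ along $f'$.

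For the inductive step of (2), I consider a closed half-stair-$k$-flat $h$, arising as one of the two closed pieces into which a stair-$(k-1)$-flat $f'$ cuts a stair-$k$-flat $f''\subset\R^d$, and case-split on the form of $f''$ and, within each, on the compatible forms of $f'$. The cleanest case is $f''=g\times\{z\}$ horizontal: the containment $f'\subseteq f''$ forces $f'=g'\times\{z\}$ with $g'\subset g$ a stair-$(k-1)$-flat in $\R^{d-1}$, so $h=h'\times\{z\}$ for a half-stair-$k$-flat $h'$ in $\R^{d-1}$ and we are reduced to the IH on (2). Vertical $f''=g\times\R$ cut by vertical $f'=g'\times\R$ is equally clean, since every slice of $h$ is the same half-stair-$(k-1)$-flat in $\R^{d-1}$. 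The remaining cases (diagonal $f''$, or vertical $f''$ cut by a horizontal or diagonal $f'$) require tracking the critical height $z$ at which $h$ changes shape: below $z$ the slice of $h$ is the corresponding half of the slice of $f''$ along $f'$, and at $z$ a larger (possibly the full) slice of $f''$ appears. Stair-convexity of each slice then follows from the IH on (2) applied in $\R^{d-1}$, and monotonicity follows from the nesting of the slice halves given by the inductive step already carried out for (1).

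The main obstacle I expect is precisely the enumeration of how a stair-$(k-1)$-flat $f'$ can sit inside a stair-$k$-flat $f''$: the induction on the half-stair-flat side is only as clean as our control over this containment. This is exactly the difficulty the authors flag; in a handful of degenerate configurations (for instance, a horizontal $f'$ inside a diagonal $f''$ whose horizontal part lies at the same height as $f'$, or a diagonal $f'$ whose critical height coincides with that of a diagonal $f''$) the slices of $h$ do not decompose as pure halves of slices of $f''$, and verifying stair-convexity and monotonicity at the critical height requires a separate, hands-on argument rather than a direct appeal to the inductive hypothesis.
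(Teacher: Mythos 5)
Your approach to part~(1)---case analysis over the three forms of a stair-$k$-flat, verified by the monotonicity-of-slices criterion---is essentially the route the paper takes, which the paper compresses into a single sentence: (1) in $\R^d$ follows from (2) in $\R^{d-1}$ together with monotonicity of slices. Your expansion of this is fine; the slice at the critical height of a diagonal flat is a closed half-stair-flat (hence the appeal to the inductive hypothesis on~(2)), and the nesting $f'\subseteq h$ gives monotonicity.

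For part~(2), however, your route is genuinely different from the paper's, and the obstruction you correctly anticipate at the end is a real gap rather than a loose end. Enumerating the compatible ways a stair-$(k-1)$-flat $f'$ can lie inside a stair-$k$-flat $f''$, and showing in each case that the slices of the half-flat $h$ are stair-convex and nested, runs into exactly the degenerate coincidences of critical heights that you name. The paper sidesteps this entirely: it derives (2) in $\R^d$ from (1) \emph{in the same dimension} $\R^d$ by a topological crossing argument. Let $h$ be a closed half-stair-$k$-flat with relative boundary $f'$, sitting inside some full stair-$k$-flat $f$. Take $a,b\in h$ and suppose $\sigma=\sigma(a,b)$ leaves $h$. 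Part (1) applied to $f$ gives $\sigma\subseteq f$, so the excursion of $\sigma$ out of $h$ passes into the other closed half of $f$ and back, crossing $f'$ at two points $c,d$ that bracket the excursion. Now the subpath property of stair-paths recorded in Section~2 (the portion of $\sigma(a,b)$ between two of its points $c,d$ equals $\sigma(c,d)$), together with part (1) applied to the stair-$(k-1)$-flat $f'$, shows this whole portion lies in $f'\subseteq h$, contradicting the assumption that $\sigma$ left $h$ somewhere between $c$ and $d$. Because the argument never inspects the particular form (horizontal, vertical, diagonal) of $f'$, $f''$, or $h$, the degenerate configurations that stall your induction simply never arise. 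I would replace your case analysis for (2) with this crossing argument; as the proposal stands, the cases you flag are not proved.
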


\begin{proof}
By induction. The first claim in $\R^d$ easily follows from the second
claim in $\R^{d-1}$ and monotonicity of slices.
And the second claim in $\R^d$ follows from the first claim in $\R^d$ as follows:

Let $h$ be a closed half-stair-$k$-flat in $\R^d$. Its relative boundary is
some stair-$(k-1)$-flat $f'$. Let $f$ be some stair-$k$-flat that contains $h$
(note that $f$ might not be unique). Let $a, b \in h$, and suppose for a contradiction
that $\sigma = \sigma(a, b)$ is not completely contained in $h$. Since $f$ is
stair-convex, $\sigma$ is completely contained in $f$. Thus, $\sigma$ must cross
$f'$ in at least two points $c$ and $d$ when going from $h$ to the other half of
$f$. But $f'$ is stair-convex, so the part of $\sigma$ between $c$ and $d$, which
equals $\sigma(c, d)$, never leaves $f'$. Contradiction.
\end{proof}

(We conjecture that every
stair-convex set in $\R^d$ that is topologically equivalent to $\R^k$ is a actually stair-$k$-flat.)

\begin{figure}
\centerline{\includegraphics{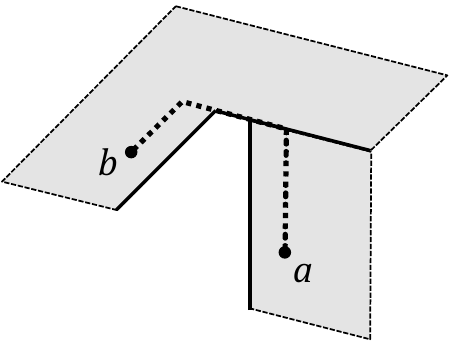}}
\caption{\label{fig_half_flat_stconv}A closed half-stair-flat is always stair-convex,
but an open one is not. The stair-path between $a$ and $b$ intersects
the relative boundary of the pictured half-stairplane.}
\end{figure}

Note that an \emph{open} half-stair-flat, unlike a \emph{closed} one, is not
necessarily stair-convex; see Figure~\ref{fig_half_flat_stconv}.

\subsection{Equivalence to Euclidean flats}

The following claim, for which we do not yet have a complete proof, states
that indeed stair-$k$-flats are the stair-convex equivalent of Euclidean
$k$-flats. Intuitively this means that, if $m$, the side of the stretched grid, is
very large, and if $f$ is a Euclidean $k$-flat that intersects $\B$, then $\pi(f \cap \B)$
looks almost like a stair-$k$-flat intersected with $\I^d$. Conversely, every
intersection of a stair-$k$-flat with $\I^d$ can be obtained this way.

Given two sets $f, g \subseteq \R^d$ and an integer $c\ge 0$, we say that $f$
and $g$ are \emph{$c$-close in $\B$} if every point of $f\cap\B$ is
$c$-close to a point of $g\cap \B$ and vice versa.
Normally, $f$ will be a flat and $g$ will be a stair-flat.

\begin{claim}\label{claim_transf}
For every Euclidean $k$-flat $f\subset\R^d$ there exists a stair-$k$-flat
$g\subset \R^d$ that is $c$-close to $f$ in $\B$ for some constant $c= c(d)$, and vice versa.
\end{claim}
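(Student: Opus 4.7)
My plan is to prove both directions of the claim by induction on $d$ with $k$ varying inside each step, matching the three-case structure of stair-$k$-flats (horizontal, vertical, diagonal) against a parallel trichotomy for Euclidean $k$-flats. The base cases $k=0$ (points) and $k=d$ (all of $\R^d$) are trivial. For the forward direction I would classify a Euclidean $k$-flat $f\subset\R^d$ with $1\le k\le d-1$ by its relation to the vertical direction $e_d$: if $f$ lies in a horizontal hyperplane $\{x_d=c\}$, view it as a $k$-flat in $\R^{d-1}$ and set $g=g'\times\{c\}$; if $f$ contains the direction $e_d$, write $f=f'\times\R$ for a $(k-1)$-flat $f'\subset\R^{d-1}$ and take $g=g'\times\R$ where $g'$ comes from induction.

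The interesting case is the diagonal one, when $f$ is the graph of an affine function $\phi\colon \overline{f}\to\R$ over a $k$-flat $\overline{f}\subset\R^{d-1}$. Here I would set $z^*=\max\{\phi(x):x\in\overline{f}\text{ and }(x,\phi(x))\in\B\}$, define $\overline{f^*}=\{x\in\overline{f}:\phi(x)=z^*\}$ (a $(k-1)$-flat in $\R^{d-1}$), and apply the inductive hypothesis \emph{simultaneously} to $\overline{f^*}$ (producing a stair-$(k-1)$-flat $g'$) and to the half-$k$-flat $\overline{f}\cap\{\phi\le z^*\}$ (producing a half-stair-$k$-flat $h$ whose relative boundary is $g'$). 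The desired stair-$k$-flat is then
\begin{equation*}
g=\bigl(g'\times(-\infty,z^*]\bigr)\cup\bigl(h\times\{z^*\}\bigr).
\end{equation*}
Closeness between $f$ and $g$ in $\B$ is checked by picking a point of $f\cap\B$, connecting it by a straight segment to a point of $f^*$, and invoking Lemma~\ref{lemma_property_Gs}: under $\pi$ such a segment becomes an almost-stair-path that lives near $\pi(g\cap\B)$. The reverse direction (stair-$k$-flat to Euclidean $k$-flat) runs the same case analysis backwards, replacing each stair-structural piece with its Euclidean counterpart and using Lemma~\ref{lemma_stair-half_to_half} in the gluing argument.

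The main obstacle is precisely the simultaneous step in the diagonal case: the induction must yield a stair-$(k-1)$-flat $g'$ close to $\overline{f^*}$ and a half-stair-$k$-flat $h$ close to $\overline{f}\cap\{\phi\le z^*\}$ whose relative boundaries \emph{agree}, but two independent applications of an induction hypothesis stated only about flats need not deliver this compatibility. One therefore has to strengthen the hypothesis to track half-stair-flats and their boundaries in tandem, a bookkeeping exercise that must also survive the quantitative $\pi$-closeness estimates. Degenerate configurations are equally troublesome: if $z^*=K_d^{m-1}$ the diagonal stair-flat collapses toward a vertical one; if $\overline{f^*}$ meets the boundary of the $(d-1)$-dimensional bounding box the constructed $g'$ can fall partly outside it and the closeness bound degrades; and if $\phi$ is nearly constant on $\overline{f}\cap\B$ the horizontal and diagonal cases blur. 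I expect that taming these degeneracies, rather than the high-level induction itself, is the source of the gaps the authors acknowledge.
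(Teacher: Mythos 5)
Your overall strategy---induction on the ambient dimension, a case analysis paralleling the horizontal/vertical/diagonal trichotomy of stair-flats, and closeness via Lemma~\ref{lemma_property_Gs}---matches the paper's approach, and you correctly identify both gaps the paper itself acknowledges (the need to ``snap'' the relative boundary of $h$ onto $g'$, and the degenerate configurations). However, your diagonal-case construction contains a genuine error: the vertical part of $g$ is anchored at the wrong end of the flat.

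Under $\pi$, the ``drop'' of $\pi(f\cap\B)$ toward height zero occurs where $f$ exits $\B$ through the \emph{bottom} face $h_0=\{x_d=1\}$, i.e.\ near $f\cap h_0$, which projects to the level set $\overline{f}\cap\{\phi=1\}$ where $\phi$ is \emph{minimal}. Accordingly, the paper sets $f'=f\cap h_0$, lets $g'$ approximate $f'$, and takes $h$ to be the half of the stair-$k$-flat $g''$ bounded by $g'$ on the $\{\phi\ge 1\}$ side (``the half of $f$ that goes up''). You instead let $g'$ approximate $\overline{f^*}=\{\phi=z^*\}$, the level set where $\phi$ is \emph{maximal}, and take $h$ to approximate $\{\phi\le z^*\}$, whose relative boundary is $\overline{f^*}$. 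In $\pi$-space this places the vertical leg of $g$ over the projection of the \emph{highest} points of $f\cap\B$ rather than the lowest. Already for a non-vertical, non-horizontal line in $\R^2$ this picks the wrong corner of the square: $\pi(f\cap\B)$ looks like $\ulcorner$ (vertical rise at the low end, then a plateau), while your $\pi(g\cap\B)$ looks like $\urcorner$ (plateau, then a vertical drop at the far end), so the two are at $\pi$-distance $\Theta(1)$, not $O(1/(m-1))$, and they are not $c$-close for any constant $c$. Your own closeness check exposes the issue: the stair-path from $\pi(p)$ to a point of $\pi(f^*)$ has its vertical segment over the $(d-1)$-projection of $\pi(p)$ (the \emph{lower} endpoint), not over $\pi(f^*)$; as $p$ ranges over $f\cap\B$ these stair-paths sweep out a set whose vertical portion sits near $\pi(f\cap h_0)$, not near $\pi(\overline{f^*})$. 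Replacing $\overline{f^*}$ by the bottom level set $\overline{f}\cap\{\phi=1\}$ and taking $h$ on the side $\{\phi\ge 1\}$ recovers the paper's construction; your choice of $z^*$ as the height of the horizontal part is fine and essentially what the paper does by picking $z=a_d$ for a generic high point $a$ of $f\cap\B$.

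Two smaller remarks. First, the subcase of the diagonal step where $f$ never meets $h_0$ (so $f\cap h_0$ lies outside $\overline{\B}$, or $f$ is very close to horizontal inside $\B$) is one of the degeneracies the paper flags and that your framing via the in-box maximum $z^*$ does not automatically avoid. Second, your invocation of Lemma~\ref{lemma_stair-half_to_half} for the reverse direction's gluing step is a mismatch: that lemma approximates stair-halfspaces by Euclidean halfspaces and is not what is needed to replace the pieces $g'$, $h$ by Euclidean $(k-1)$- and half-$k$-flats; the paper's reverse direction instead recurses directly and runs into the same ``snap'' difficulty.
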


\begin{proof}[``Proof"]
For the first direction, let $f$ be a Euclidean $k$-flat. If $f$ is vertical,
then the desired stair-$k$-flat $g$ is also vertical, and the claim follows by
induction on $d$. So suppose $f$ is not vertical.

Let $\overline{\B}$ be the bottom face of $\B$,
let $h_0 \subset \R^d$ be the horizontal hyperplane containing $\overline{\B}$,
and let $f' = f \cap h_0$. If $f' = \emptyset$ then $f$ is horizontal, so the desired
stair-$k$-flat $g$ is also horizontal, and it can again be constructed by induction on $d$.

Otherwise, $f'$ is a $(k-1)$-flat (which may or may not intersect $\overline{\B}$).
By induction, $f'$ is $c(d-1)$-close to some stair-$(k-1)$-flat $g'$ in $\overline{\B}$.

Next, let $\overline f$ be the vertical projection of $f$ into $h_0$. Again, by induction
$\overline f$ is $c(d-1)$-close to some stair-$k$-flat $g''$ in $\overline{\B}$.

Furthermore, since $f' \subset \overline f$, we know that $g'$ is very close to $g''$.\footnote{Here
is the problem: We would like to have $g'\subset g''$. One solution would be to ``snap" $g'$
into $g''$, whatever that means.}

Now, let $a$ and $b$ be two points on $f\cap \B$, and let $\overline a$ and $\overline b$ be
their vertical projections into $\overline{\B}$. One can verify that, if both $\overline a$ and $\overline b$
lie at Euclidean distance at least $1$ from $f'$, then $a$ and $b$ are $1$-close in last coordinate.
(This follows by a simple calculation involving ratios; compare to Lemma~\ref{lemma_property_Gs}.)
Intuitively, in $\pi(f\cap \B)$ all points have almost the same height, except for those that are
very close to $\pi(f'\cap \B)$ in the first $d-1$ coordinates, for which the height drops abruptly to zero.

Thus, the desired stair-$k$-flat $g$ is obtained as follows: Let $h$ be the ``correct" half-stair-flat
of $g''$ bounded by $g'$ (the half corresponding to the half of $f$ that goes \emph{up} in last coordinate).
Then pick an arbitrary point $a\in f$ with positive height such that its projection $\overline a$
lies in $\overline{\B}$ and has Euclidean distance greater than $1$
from $f'$. Let $z = a_d$, and let
\begin{equation*}
g = \bigl( g' \times (-\infty, z] \bigr) \cup \bigl(h \times \{z\} \bigr).
\end{equation*}
(The case where no such $a$ exists can also be taken care of; we omit the details.)

Let us briefly sketch the other direction: Let
\begin{equation*}
g = \bigl( g' \times (-\infty, z] \bigr) \cup \bigl(h \times \{z\} \bigr).
\end{equation*}
be a given stair-$k$-flat. Let $g''\subset \R^{d-1}$ be a full stair-$k$-flat that
contains $h$.
By induction, let $f'\subset h_0$ be the Euclidean $(k-1)$-flat that approximates $g'$ in
$\overline{\B}$,
and let $f''\subset h_0$ be the Euclidean $k$-flat that approximates $g''$ in $\overline{\B}$.

We know that $f'$ is ``close" to $f''$---not in the Euclidean sense, but in the sense of \emph{$c$-closeness}
based on the mapping $\pi$. As before, somehow ``snap" $f'$
into $f''$, getting a Euclidean $(k-1)$-flat $f''' \subset f''$.

Let $\overline a\in f''$ be a point that has Euclidean distance at least $1$ from $f'''$; elevate
$\overline a$ vertically to height $z$, getting point $a$; and finally let $f$ (the desired
Euclidean $k$-flat) be the affine hull of $f'''$ and $a$.
\end{proof}

\subsection{The generalized covering lemma}

\begin{lemma}[Generalized covering lemma]\label{lemma_exact_cover}
Let $f$ be a stair-$k$-flat in $\R^d$. Then, there exists a family $\mathcal H$ of
$\binom{d-1}{k}\frac{d+k+1}{k+1}$ closed stair-halfspaces, each one containing
$f$ in its boundary, and together covering $\R^d$ exactly $\binom{d-1}{k}$ times
(apart from the points lying on the boundary of the stair-halfspaces of $\mathcal H$,
which might be covered more times).
\end{lemma}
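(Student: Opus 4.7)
My plan is a double induction on $d$ and $k$, driven by the Pascal-like recurrences $N(d,k) = N(d-1,k-1) + N(d-1,k)$ and $\binom{d-1}{k} = \binom{d-2}{k-1} + \binom{d-2}{k}$, where $N(d,k) := \binom{d-1}{k}(d+k+1)/(k+1)$ is the target family size. The base case $k=0$ is immediate: for a point $a \in \R^d$, the partition $\{C_0(a), \ldots, C_d(a)\}$ consists of $N(d,0) = d+1$ single-component stair-halfspaces, covers $\R^d$ exactly once, and has $a$ on every boundary.

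For the inductive step I focus on the generic diagonal form $f = (f' \times (-\infty, z]) \cup (h \times \{z\})$, where $f' \subset \R^{d-1}$ is a stair-$(k-1)$-flat, $f'' \subset \R^{d-1}$ is a stair-$k$-flat containing $f'$, and $h \subset f''$ is one of the two half-stair-$k$-flats cut off by $f'$. I invoke the inductive hypothesis twice in $\R^{d-1}$: on $f''$ to obtain a family $\mathcal{H}^{(2)}$ of $N(d-1,k)$ stair-halfspaces with $f''$ on their boundary and covering $\R^{d-1}$ exactly $\binom{d-2}{k}$ times, and on $f'$ to obtain a family $\mathcal{H}^{(1)}$ of $N(d-1,k-1)$ stair-halfspaces with $f'$ on their boundary and covering $\R^{d-1}$ exactly $\binom{d-2}{k-1}$ times.

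Each $H \in \mathcal{H}^{(2)}$ is lifted to $\widetilde{H} = H \times (-\infty, z]$, and since $f' \cup h \subset f'' \subset \partial H$ one sees that $f \subset \partial \widetilde{H}$. Each $H \in \mathcal{H}^{(1)}$ is lifted in one of two ways: a \emph{downward} lift $H \times (-\infty, z]$, used when $h \subseteq H$; or an \emph{upward} lift $H \times (-\infty, z] \cup \R^{d-1} \times [z, \infty)$ (adjoining index $d$), used when $h \cap \mathrm{int}(H) = \emptyset$. In either case $f$ lies in the boundary of the lift. A generic point of $h$ lies in exactly $\binom{d-2}{k-1}$ members of $\mathcal{H}^{(1)}$, so---granting the geometric dichotomy below---exactly $\binom{d-2}{k-1}$ of the $\mathcal{H}^{(1)}$-lifts are downward and $N(d-1,k-1)-\binom{d-2}{k-1} = \binom{d-1}{k}$ are upward. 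Summing coverages, the lower half $\R^{d-1}\times(-\infty,z]$ is covered $\binom{d-2}{k} + \binom{d-2}{k-1} = \binom{d-1}{k}$ times, the upper half $\R^{d-1}\times[z,\infty)$ is covered $\binom{d-1}{k}$ times by the upward lifts alone, and the total family size is $N(d-1,k) + N(d-1,k-1) = N(d,k)$.

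The main obstacle is the geometric dichotomy underlying the case split: for every $H \in \mathcal{H}^{(1)}$ with $f' \subset \partial H$, the half-stair-flat $h$ must be either entirely contained in $H$ or entirely disjoint from $\mathrm{int}(H)$. The Euclidean analogue---a hyperplane containing a $(k{-}1)$-flat either contains an adjoining $k$-flat or meets it transversally along that $(k{-}1)$-flat---is routine, but in the stair-setting I would need to work directly with the ``stair-hyperplane'' $\partial H$, show that $\partial H \cap f''$ is either all of $f''$ or exactly $f'$, and then invoke stair-convexity of $h$ and $H$ to conclude. I expect this to be the technical heart of the argument and to interact with the ``snapping'' difficulty already visible in Claim~\ref{claim_transf}. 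Horizontal and vertical stair-$k$-flats I would attempt to handle as limits of diagonal ones, by sending the vertical or horizontal part of $f$ to infinity; however, as noted at the start of this appendix, truly vertical stair-flats need not be the boundary of any stair-halfspace in the paper's definition, so those degenerate cases appear to require separate treatment.
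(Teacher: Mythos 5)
Your proposal follows essentially the same inductive construction as the paper's own (incomplete) argument in Appendix~A: base case $k=0$ using the components $C_i(a)$, and for the diagonal case $f=(f'\times(-\infty,z])\cup(h\times\{z\})$, two inductive calls in $\R^{d-1}$ on $f'$ and on $f''$, with the $f''$-family extruded down and the $f'$-family split into downward and upward lifts, and the same Pascal-like identities closing the count. You also correctly flag the two genuine gaps the paper itself admits: establishing the geometric claim about how $h$ sits relative to a stair-halfspace whose boundary contains $f'$, and the degenerate horizontal/vertical cases.

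One precision issue worth fixing: your ``geometric dichotomy'' is really a \emph{trichotomy}. The paper's Lemma~\ref{lemma_H3types} separates $\mathcal H'$ into three classes --- $h$ meets the interior of $H$; $h\subset\partial H$; $h\not\subset H$ --- and the middle class is exactly where your two conditions ``$h\subseteq H$'' and ``$h\cap\mathrm{int}(H)=\emptyset$'' both hold, so your assignment rule is ambiguous there. Worse, your counting step (``a generic point of $h$ lies in exactly $\binom{d-2}{k-1}$ members of $\mathcal H^{(1)}$'') presupposes a point of $h$ that avoids every boundary $\partial H$, $H\in\mathcal H^{(1)}$; if $h\subset\partial H$ for some $H$ such a point does not exist. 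The paper handles this by weakening the count to the two-sided inequalities of Claim~\ref{claim_D_prime}, $|\Ha|\le\Delta'\le|\Ha\cup\Hb|$, and then choosing a partition $\Hup\cup\Hdown$ of the right sizes by assigning the flexible $\Hb$-members freely. You should incorporate this slack explicitly: the $\Hb$-halfspaces are exactly the ones that may be lifted either up or down, and it is that freedom, not an exact count, that makes $|\Hup|=\Delta$ achievable in all cases.
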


\begin{proof}
We construct $\mathcal H$ by induction on $k$ and $d$.

Let $\Gamma = \Gamma_{k,d} = \binom{d-1}{k}\frac{d+k+1}{k+1}$ denote the desired
number of stair-halfspaces, and let $\Delta = \Delta_{k,d} = \binom{d-1}{k}$
denote the number of times space should be covered.

When $k=0$, $f$ consists of a single point $a$, and we have $\Gamma = d+1$,
$\Delta = 1$. In this case we let $H = \{C_0(a), \ldots, C_d(a)\}$, and we
have $d+1$ stair-halfspaces, all containing $a$ in their boundary, and together covering
space exactly once, as required.

Now suppose $k\ge 1$, and assume that $f$ is a diagonal stair-$k$-flat (the other types
of stair-flats are degeneracies, as mentioned above). Thus, $f$ has the form
\begin{equation*}
f = \bigl( f' \times (-\infty, z] \bigr) \cup \bigl( h \times \{z\} \bigr),
\end{equation*}
for some stair-$(k-1)$-flat $f'$ and some half-stair-$k$-flat $h$, both in
$\R^{d-1}$, such that $f'$ is the relative boundary of $h$.
Let $f''$ be a full stair-$k$-flat in $\R^{d-1}$ containing $h$
(there might be more than one way to ``complete" $h$ into a stair-flat).

Let $\Gamma' = \Gamma_{k-1,d-1}$, $\Delta' = \Delta_{k-1, d-1}$, $\Gamma''
= \Gamma_{k,d-1}$, $\Delta'' = \Delta_{k,d-1}$. By induction, we can construct
a family $\mathcal H'$ of $\Gamma'$ stair-halfspaces in $\R^{d-1}$, all containing $f'$
in their boundary and covering $\R^{d-1}$ exactly $\Delta'$ times, and a family
$\mathcal H''$ of $\Gamma''$ stair-halfspaces in $\R^{d-1}$, all containing $f''$ in
their boundary and covering $\R^{d-1}$ exactly $\Delta''$ times.

Also note the following identities:
\begin{align}
\Gamma &= \Gamma' + \Gamma'', \nonumber \\
\Delta &= \Delta' + \Delta'', \nonumber \\
\Gamma' &= \Delta + \Delta'. \label{eq_GDD}
\end{align}

We will construct our desired family $\mathcal H$ of stair-halfspaces as
$\mathcal H = \mathcal H_1 \cup \mathcal H_2$,
with $|\mathcal H_1| = |\mathcal H'| = \Gamma'$ and $|\mathcal H_2| = |\mathcal H''| = \Gamma''$.

Let us start by constructing $\mathcal H_2$, which is easier. We let
\begin{equation*}
\mathcal H_2 = \{ H'' \times (-\infty, z] : H'' \in \mathcal H''\}
\end{equation*}
(namely, we ``extrude" each halfspace of $\mathcal H''$ in the $d$-th direction from
$-\infty$ to $z$).

Let $H = H''\times (-\infty, z]$ be a stair-halfspace in $\mathcal H_2$. Note that $H$
contains all of $f$, as required. Furthermore, the boundary of $H$ is
\begin{equation*}
\partial H = \bigl( \partial H'' \times (-\infty, z] \bigr) \cup
\bigl( H'' \times \{z\} \bigr),
\end{equation*}
so $f$ is actually contained in the boundary of $H$, as required.

Note that the stair-halfspaces of $\mathcal H_2$
cover the ``lower part" of $\R^d$ (meaning, $\R^{d-1} \times (-\infty, z]$) exactly $\Delta''$
times, and they do not cover the ``upper part" of $\R^d$ (meaning, $\R^{d-1} \times
[z,\infty)$) at all.

We now construct $\mathcal H_1$.
Let us first take a more careful look at the stair-halfspaces of $\mathcal H'$. Recall that the
stair-halfspaces of $\mathcal H'$ were only ``designed" to contain $f'$, but not $h$.

\begin{lemma}\label{lemma_H3types}
The family $\mathcal H'$ can be partitioned into $\mathcal H' = \Ha \cup \Hb \cup \Hc$, such that:
\begin{itemize}
\item $h$ intersects the interior of every $H\in\Ha$;
\item $h$ is contained in the boundary of every $H\in\Hb$; and
\item $h$ is not contained in any $H\in\Hc$.
\end{itemize}
\end{lemma}

\begin{proof}
We have to show that, if $h$ intersects the interior of some $H\in\mathcal H'$, then $h\subset H$.
Recall that the relative boundary of $h$, namely $f'$, lies on the \emph{boundary} of every $H\in\mathcal H'$.

So supppose for a contradiction that
there exists a half-stair-$k$-flat $h\subset \R^d$ and there exists a stair-halfspace $H\subset
\R^d$ such that $\partial H$ contains the relative boundary of $h$, and such that $h$ intersects
\emph{both} the interior of $H$ \emph{and} $\R^d\setminus H$. Then, by Claim~\ref{claim_transf}
we could construct a similar configuration with a \emph{Euclidean} half-$k$-flat and a \emph{Euclidean}
halfspace. But that is clearly impossible.\footnote{We would like a proof of Lemma~\ref{lemma_H3types}
that uses only the notions of stair-convexity and does not invoke Claim~\ref{claim_transf}; but
we have not found such a proof.}
\end{proof}

\begin{claim}\label{claim_D_prime}
We have $|\Ha| \le \Delta' \le |\Ha \cup \Hb|$ and $ |\Hc| \le \Delta \le |\Hb \cup \Hc|$.
\end{claim}

\begin{proof}[``Proof"]
Let $S = \R^d \setminus \bigcup_{H\in\mathcal H'} \partial H$ be the set of all points not lying
on the boundary of any stair-halfspace of $\mathcal H'$.

Supose first that our half-stair-flat $h$ intersects $S$ (so $h$ is, in some sense, ``generic"),
and let $a$ be a point in $h\cap S$. Then, by Lemma~\ref{lemma_H3types},
for every $H\in\mathcal H'$ we have $h\subset H$ if and
only if $a\in H$. By the construction of $\mathcal H'$ we know that it contains exactly $\Delta'$
halfspaces that satisfy this latter property; therefore, $|\Ha| = \Delta'$ and $\Hb = \emptyset$.
Then equation (\ref{eq_GDD}) implies that $|\Hc| = \Delta$, and we are done.

If $h$ does not intersect $S$, then we apply a limit argument: $h$ is
arbitrarily close to a half-stair-$k$-flat $h'$, having the same relative boundary as $h$,
such that $h'$ \emph{does} intersect $S$.\footnote{This would need a proof, of course.}
Define the sets $\Ha, \Hb, \Hc$ of Lemma~\ref{lemma_H3types}
for $h'$, and continuously
``rotate" $h'$ until it matches $h$. At the beginning,
we have $|\Ha| = \Delta'$, $\Hb=\emptyset$; and the only combinatorial changes that can occur
involve moving \emph{into} the boundary of stair-flats $H\in \mathcal H'$. In other words, elements
can only move from $\Ha$ or $\Hc$ into $\Hb$.
\end{proof}

Now, partition $\mathcal H'$ into $\mathcal H' = \Hup \cup \Hdown$ such that $\Hc \subseteq \Hup$
and $\Ha\subseteq \Hdown$, and such that $|\Hup| = \Delta$
and $|\Hdown| = \Delta'$.

Then our desired family $\mathcal H_1$ is
\begin{align*}
\mathcal H_1 &= \bigl\{ \bigl( H \times (-\infty, z] \bigr) \cup \bigl(\R^{d-1} \times [z, \infty)
\bigr) : H\in \Hup \bigr\} \\
& \cup \bigl\{ H \times (-\infty, z] : H \in \Hdown \bigr\}
\end{align*}
(extruding every stair-halfsace in the $d$-th direction as before, and adding the $d$-th component
only to the stair-halfspaces of $\Hup$).

The condition $\Hc\subseteq \Hup$ implies that every stair-halfspace of $\mathcal H_1$ contains $h\times \{z\}$,
and thus all of $f$.

Moreover, for a stair-halfspace $H\in\Hup$, the boundary of the corresponding stair-halfspace $H'\in
\mathcal H_1$ is
\begin{equation*}
\partial H' = \bigl( \partial H \times (-\infty, z] \bigr) \cup \bigl( (\R^{d-1}\setminus H) \times
\{z\} \bigr),
\end{equation*}
so actually $f\in\partial H'$.

Finally, note that $\mathcal H_1$ covers the ``lower part" of $\R^d$ exactly $\Delta'$ times and
the ``upper part" of $\R^d$ exactly $\Delta$ times. Hence, $\mathcal H = \mathcal H_1 \cup \mathcal
H_2$ is our desired family.
\end{proof}

\begin{figure}
\centerline{\includegraphics{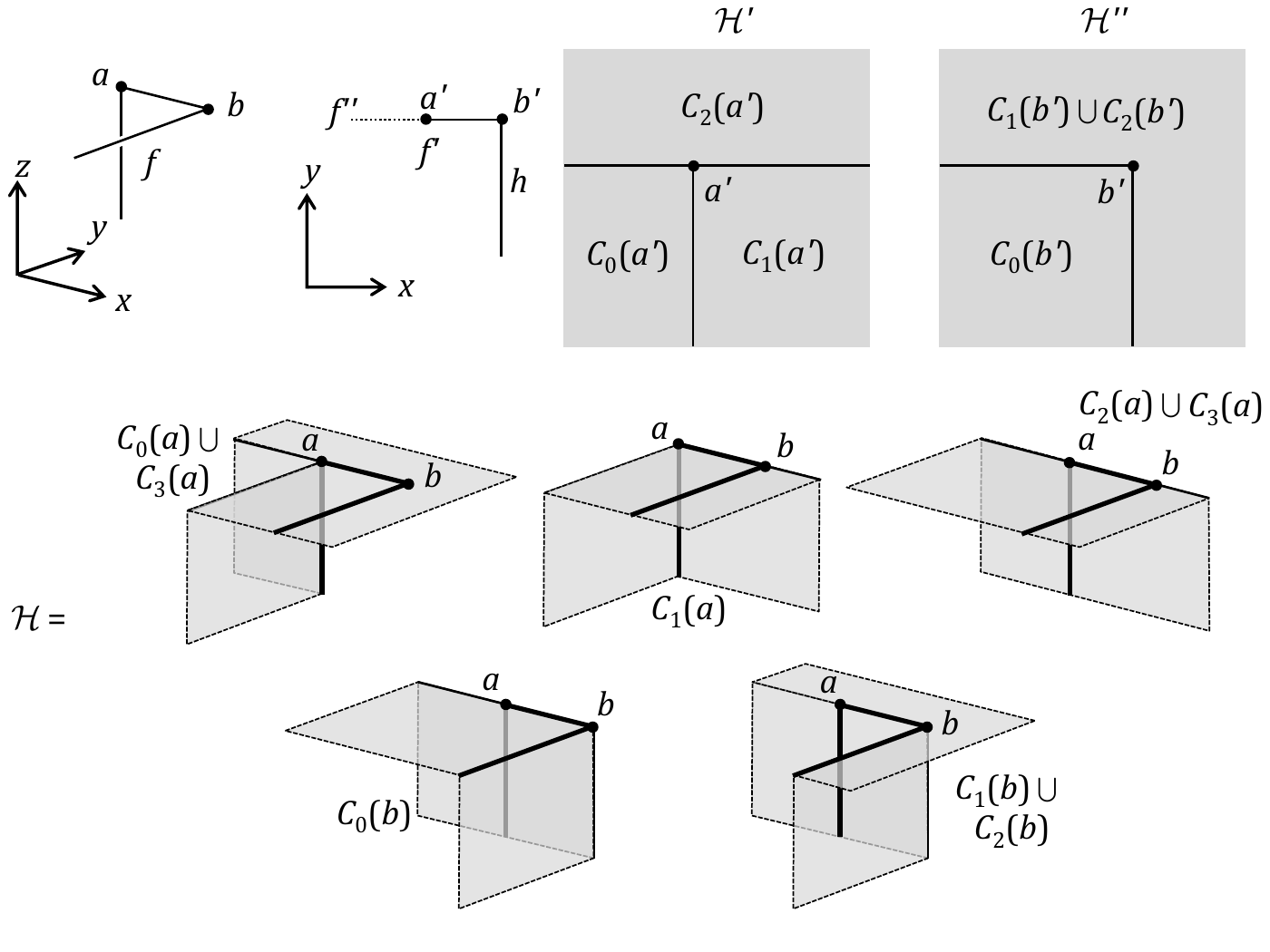}}
\caption{\label{fig_construct_H}An example of constructing $\mathcal H$ from $\mathcal H'$ and $\mathcal H''$.}
\end{figure}

Figure~\ref{fig_construct_H} shows an example of the construction of
Lemma~\ref{lemma_exact_cover} for a stair-line $f$ (so $k=1$) in $\R^3$. The stair-line $f$
is shown at the top left. To its right are shown the two-dimensional components from
which $f$ is made: The stair-point $f'$ and the stair-ray $h$. The stair-line $f''$ that
contains $h$ is also shown.

Next are shown the three stair-halfplanes of $\mathcal H'$, which contain $f'$ in their boundary and together
cover the plane exactly once. We have $\mathcal H' = \{ C_0(a'), C_1(a'), C_2(a') \}$.

Next are the two stair-halfplanes of $\mathcal H''$, which contain $f''$ in their boundary and together cover
the plane exactly once. We have $\mathcal H'' = \{ C_0(b'), C_1(b') \cup C_2(b') \}$.

Finally are shown the five stair-halfspaces of the desired set $\mathcal H$, which
contain $f$ in their boundary and together cover space exactly twice. We have
\begin{equation*}
\mathcal H'' = \{ C_0(a) \cup C_3(a), C_1(a), C_2(a) \cup C_3(a), C_0(b), C_1(b) \cup C_2(b) \}.
\end{equation*}

The adventurous reader might want to try to construct the seven stair-halfspaces that cover $\R^4$ three
times for the stair-plane in Figure~\ref{fig_plane_R4}.

\end{document}